\newcommand{\R}{\mathbb{R}}
\newcommand{\C}{\mathbb{C}}
\newcommand{\M}{\mathcal{M}}
\newcommand{\dd}{\mathrm{d}}
\newcommand{\rr}[1]{\left(#1\right)}
\newcommand{\fock}{{\mathfrak{F}(\mathcal{H})}}
\DeclareMathOperator{\Tr}{Tr}
\renewcommand{\tilde}{\widetilde}
\renewcommand{\openone}{\mathbbm{1}}
\newcommand{\cketbra}[2]{{\left| {#1} \right) \!\!\left( {#2} \right|}}
\newcommand{\cbraket}[1]{{\left( {#1} \right)}}
\newcommand{\cket}[1]{{\left|{#1} \right)}}
\newcommand{\ketbra}[2]{{\left| {#1} \right\rangle \!\!\left\langle {#2} \right|}}
\DeclareMathOperator{\Ad}{Ad}
\DeclareMathOperator{\Span}{span}
\DeclareMathOperator{\Id}{Id}
\newtheorem{proposition}{Proposition}
\newtheorem{lemma}{Lemma}
\newtheorem{definition}{Definition}
\theoremstyle{definition}
\newtheorem{example}{Example}
\theoremstyle{definition}
\newcommand{\exampleqed}{\hfill{$\blacklozenge$}}
\begin{document}

\title{Continuous matrix product operators for quantum fields}

\author{Erickson Tjoa}
\affiliation{Max-Planck-Institut f\"ur Quantenoptik, Hans-Kopfermann-Stra\ss e 1, D-85748 Garching, Germany}
\affiliation{Munich Center for Quantum Science and Technology (MCQST), Schellingstraße 4, D-80799 Munich, Germany}

\author{J. Ignacio Cirac}
\affiliation{Max-Planck-Institut f\"ur Quantenoptik, Hans-Kopfermann-Stra\ss e 1, D-85748 Garching, Germany}
\affiliation{Munich Center for Quantum Science and Technology (MCQST), Schellingstraße 4, D-80799 Munich, Germany}

\begin{abstract}

In this work we introduce an ansatz for continuous matrix product operators for quantum field theory. We show that (i) they admit a closed-form expression in terms of finite number of matrix-valued functions without reference to any lattice parameter; (ii) they are obtained as a suitable continuum limit of matrix product operators; (iii) they preserve the entanglement area law directly in the continuum, and in particular they map continuous matrix product states (cMPS) to another cMPS. As an application, we use this ansatz to construct several families of continuous matrix product unitaries beyond quantum cellular automata.

\end{abstract}

\maketitle

\paragraph*{Introduction.}
Tensor networks have been very useful in providing a compact description of quantum many-body physics with local interactions. Many analytical insights in tensor network theory were also made possible by its compatibility with tools from quantum information theory \cite{cirac2021review}. Tensor networks can be used to describe physically relevant states known as matrix product states (MPS) \cite{Hastings_2007,perez2006matrix}, and also to describe operators known as matrix product operators (MPO) \cite{verstraete2004mpdo,zwolak2004mpdo}. The latter is particularly useful to describe mixed states \cite{verstraete2004mpdo,zwolak2004mpdo,Cuevas2013mpdo,verstraete2018convex,liu2025parent,kato2024exact}, symmetries \cite{GarreRubio2023classifyingphases,lootens2021mposymm,bultinck2017mpoa,molnar2022matrix} and approximate short-time evolutions \cite{Pirvu_2010mpo-evol,zaletel2015longrangeCMPO}. 

Tensor networks are designed to capture the notion of area-law entanglement in quantum many-body systems \cite{wolf2007arealaw,piroli2020qca,orus2014,eisert2010arealaw}. This feature allows low-energy states of local Hamiltonians to be well-approximated by MPS \cite{Hastings_2007,brandao2015exponential,verstraete2006faithful,schuch2008entropy}. Similarly, the fact that quantum cellular automata (QCA) \cite{schumacher2004reversible,gross2012index,Farrelly2020reviewofquantum} (unitaries with exact light cone) can only create little entanglement is captured by its equivalent representation as translationally-invariant {matrix product unitaries} (MPU) \cite{cirac2017mpu,xie2018mpu}. However, formulating a good notion of area-law entanglement is non-trivial in the continuum, i.e., quantum fields. This is relevant whenever one wishes to preserve some properties that only exist in the continuum (e.g., symmetries) or when the system of interest is naturally defined as a quantum field.

The first attempt towards porting tensor network techniques directly to the continuum was the construction of {continuous matrix product states} (cMPS) \cite{verstraete2010cmps,haegeman2013cmps,osborne2010holographic}. They have been successful as a variational ansatz for strongly interacting (non-)relativistic quantum systems \cite{guifre2015liebliniger,ganahl2017liebliniger,tilloy2021relativisticCMPS,tilloy2022study,tuybens2022variationalCMPS} and as a continuum limit of MPS, they capture the entanglement area law directly in the continuum. Furthermore, cMPS can be viewed as a real-space renormalization procedure applied to one-dimensional MPS run backwards, i.e., performing fine-graining procedure rather than coarse-graining by blocking \cite{gemma2018continuum,gemma2020generalizedMPS}.

In this work we provide an ansatz for \textit{continuous matrix product operator} (cMPO) for quantum fields \footnote{We note that in \cite{tang2020cMPOfinitetemp} a notion of cMPO was introduced to capture the continuous (imaginary) time evolution of a lattice system in terms of tensor networks. Our ansatz is consistent with their definition if one exchanges time by space. Furthermore, by expressing the MPOs in the Fock-like representation, it allows us to take the continuum limit directly and explicitly express it in terms of bosonic operators, in much the same way cMPS are \cite{verstraete2010cmps}.}. We formulate cMPO as a suitable continuum limit of discrete MPO without reference to any underlying lattice spacing. We show that it admits a closed-form expression in terms of a path-ordered exponential similar to traced Wilson line operator in non-Abelian gauge theory. By construction it preserves area-law entanglement natively in the continuum and they map a cMPS to another cMPS. As an application of this ansatz, we are able to construct continuous matrix product unitaries (cMPU) that are  continuum limit of MPUs beyond QCA \cite{Styliaris2025matrixproduct}.  

\paragraph*{Review of cMPS.}
Let us first review the basic features of cMPS. Our setting is a non-relativistic bosonic field defined on a one-dimensional interval $I \subset \R$ of length $\ell$. In the second quantization formalism, a bosonic field operator $\psi(x)$ satisfies the canonical commutation relations $[\psi(x),\psi^\dagger(y)] = \delta(x-y)\openone$ and a vacuum state $\ket{\Omega}$ with respect to $\psi$ is defined to be the state for which $\psi(x) \ket{\Omega} = 0$ for all $x$. The Hilbert space of the field is a Fock space $\fock \coloneqq \bigoplus_{N=0}^\infty \mathcal{H}^{\odot N}$ where $\mathcal{H}\cong L^2(I)$ is the one-particle sector Hilbert space and  $\mathcal{H}^{\odot N}$ is the symmetrized subspace of $N$ identical particles. By construction any arbitrary state in $\fock$ must have finite total particle number. 

\begin{definition}[\cite{verstraete2010cmps,haegeman2013cmps}]
    \label{definition: cMPS}
    A bosonic cMPS with bond dimensions $D$ is the state $\ket{\psi[B,Q,L]}\in \fock$ given by
    \begin{align}
        \hspace{-0.085cm}\ket{\psi[B,Q,L]}
        &\coloneqq \Tr_D(B\mathcal{P}e^{ \int_I \dd x\,Q(x)\otimes \openone + L(x)\otimes \psi^\dagger(x)})\ket{\Omega}
        \label{eq: cMPS-def}
    \end{align}
    where $\mathcal{P}$ denotes path-ordering, $\ket{\Omega}\in \fock$ is the Fock vacuum state annihilated by $\psi(x)$, $Q(x) $ and $ L(x)$ are one-parameter family of matrices \footnote{Here we use $L(x)$ rather than $R(x)$ as is common in the cMPS literature \cite{haegeman2013cmps,verstraete2010cmps}, as we will associate $L(x),R(x)$ with left- and right-multiplication maps respectively when defining cMPO.} in $M_D(\C)$, $B\in M_D(\C)$ is the boundary matrix, and $\Tr_D$ is the trace over the auxiliary space. 
\end{definition}
\noindent In general $B$ is allowed to depend on $\ell$ but not $x$. We say that a cMPS is bulk-uniform if $Q(x),L(x)$ are constant matrices independent of $x\in I$, and uniform if $B=\openone$. For concreteness, we use $I=[-\ell/2,+\ell/2]$. In principle one can generalize this to multiple fields \cite{haegeman2013cmps}, but for clarity we restrict to one species of bosons. 

A useful representation of the cMPS as an element of the Fock space is obtained by writing explicitly the sum over $j$-particle sectors, namely
\begin{align}
    \label{eq: cMPS-Fock}
    &\ket{\psi[B,Q,L]}\\
    &= \sum_{j=0}^\infty \int D^jx\,\Tr(BV_{-}^1L_1V_1^2...L_jV_j^{+}) \psi^\dagger(x_1)...\psi^\dagger(x_j)\ket{\Omega} \notag
\end{align}
where we use a shorthand the path-ordered measure $\int D^jx \equiv \int_I...\int_I\dd x_1...\dd x_j\,\Theta(x_1-x_2)...\Theta(x_{j-1}-x_{j})$, and the wavefunction coefficients are given in terms of $B,Q,L$: 
\begin{align}
    V_{i}^{i+1} = \mathcal{P}\exp \int_{x_i}^{x_{i+1}}\dd x\,Q(x)\,,\quad L_i \equiv L(x_i)\,.
    \label{eq: cMPS-coefficients}
\end{align}
We write $V_-^1\equiv V_{-\ell/2}^{x_1}$ and $V_j^+\equiv V_{x_j}^{+\ell/2}$ when the lower and upper limits of the integrals are at the boundaries respectively. For convenience we refer to states of the form $\int D^jx\,f(x_1,...,x_j) \psi^\dagger(x_1)...\psi^\dagger(x_j)\ket{\Omega}$ as $j$-particle Fock states.

The cMPS can be constructed as a continuum limit of discrete MPS, by setting the local dimension $d\to\infty$ and identifying the tensors at site $x$ to be $A^0_x \approx \openone + \epsilon Q(x)$, $A^{1}_x  \approx \sqrt{\epsilon} L(x)$ at leading order in lattice spacing $\epsilon$, and all $A^{n\geq 2}_x$ are fixed by physical consistency. For physically relevant situations such as computation of energy expectation values or correlation functions, in principle one may also need to impose regularity conditions for $Q(x),L(x)$ \cite{haegeman2013cmps}.

\paragraph*{Definition of cMPO.}
\label{sec: definition-obstruction}

A theory of {continuous matrix product operator} (cMPO) should provide us with a sufficiently large class of operators with the following desirable properties: 
(i) it is parametrized by some finite number of matrix-valued functions in $M_D(\C)$; 
(ii) it is obtained as a continuum limit of some MPO with constant bond dimension $D$; 
(iii) Given two cMPO $O_1,O_2$ with bond dimensions $D_1,D_2$ respectively, the product $O_1O_2$ is another cMPO with bond dimension $D\leq D_1D_2$. If $O$ is also unitary, we would also like to have 
(iv) a systematic way of verifying unitarity in terms of the local tensors. These considerations suggest the following definition of cMPO.
\begin{definition}[cMPO]
\label{definition: cMPO-ansatz}
    Let $Q(x),L(x),R(x),T(x)\in M_D(\C)$ be matrix-valued functions on $I$ and $B\in M_D(\C)$ the boundary matrix that can depend on system size $\ell \in (0,\infty]$. A cMPO with bond dimension $D$ is an operator $O$ acting on $\fock$ defined as 
    \label{eq: cMPO-ansatz}
    \begin{align}
     O \equiv O [B,Q,L,R,T] = \Tr_{D}(B\mathcal{P}e^{\int\dd x\,\mathfrak{L}_x[\cdot]})(\ketbra{\Omega}{\Omega})
    \end{align}
    where $[\cdot]$ takes operators of the field theory as input,
    \begin{equation}
        \mathfrak{L}_x \coloneqq Q(x)\otimes \Id + L(x)\otimes l_x + R(x)\otimes r_x + T(x)\otimes \Ad_x
    \end{equation}
    and the supermaps acting on the field are defined as
    \begin{equation}
    \begin{aligned}
        l_x[\cdot] &\coloneqq \psi^\dagger(x)[\cdot]\,,\quad r_x[\cdot] \coloneqq [\cdot]\psi(x)\,,\\
        \Id[\cdot] &\coloneqq [\cdot]\,,\quad\qquad\!\! \Ad_x[\cdot] \coloneqq \psi^\dagger(x)[\cdot]\psi(x)\,,
    \end{aligned}
    \end{equation}
\end{definition}
\noindent where `$\Ad_x$' refers to adjoint action by $\psi(x)$. We remark that $O$ is defined to act specifically on the Fock space $\mathfrak{F}(\mathcal{H})$ where the Fock vacuum $\ket{\Omega}$ lives for any $\ell \in (0,\infty]$, as this allows us to work with the thermodynamic limit. Once we fix $\ket{\Omega}$, all bulk-uniform cMPS in the thermodynamic limit that is not equivalent to the Fock vacuum is not in the domain of $O$. 

Let us show that Definition~\ref{definition: cMPO-ansatz} satisfies our requirements (i)-(iii). By construction (i) is satisfied. To show that it fulfills (ii), we first express a discrete MPO in a `Fock-like representation'  (See Section A of Appendix \cite{suppmat}). We first define the shorthand for the ``free propagator'' $\mathsf{W}_{x_i}^{x_j} = A^{00}_{x_i}A^{00}_{x_i+1}...A^{00}_{x_j}$ and a collection of \textit{ladder maps} at site $x$ 
\begin{align}
    f^{ij}_{x}[\cdot]\coloneqq (J^{+}_{x})^{i}[\cdot] (J^{-}_{x})^{j}\,,
    \label{eq: ladder-maps}
\end{align}
where $f^{00}_{x}[\cdot]\equiv \Id[\cdot]$ is the identity map and $J^\pm$'s are (normalized) $D$-dimensional matrix representations of the ladder operators for the $\mathfrak{su}(2)$ algebra. Then we can write any MPO as (See Section A of Appendix \cite{suppmat})
\begin{align}
    &O_N[B,\{A^{ij}_x\}] \notag\\
    &= \sum_{m=0}^N  \sum_{\substack{x_1< ... < x_m}}\!\!\!\!\!\! \Tr\rr{B\mathsf{W}_{-}^{x_1-1}A_{x_1}^{i_1j_1}...A_{x_m}^{i_m j_m} \mathsf{W}_{x_m+1}^+}\notag\\
    &\hspace{1cm}\times  f^{i_1j_1}_{x_1}...f^{i_mj_m}_{x_m}\bigr[\ketbra{\Omega_N}{\Omega_N}\bigr] + \mathcal{S}_{N,d}\,,
    \label{eq: Fock-representation-MPO}
\end{align}
where for fixed $m$ the summation contains exactly $m$ ladder operators $J^\pm_{x_k}$ and $i_k,j_k = 0,1$. The remainder $\mathcal{S}_{N,d}$ contains all terms that involves $(J^\pm_{x_k})^{n\geq 2}$ on either side of $\ketbra{\Omega_N}{\Omega_N}$. Now the strategy is to follow the cMPS construction by rescaling the matrices and operators as 
\begin{equation}
    \begin{aligned}
        A^{00}_{x} &\approx \openone_D + \epsilon Q(x) \qquad f^{00}_{x}[\cdot] =  \Id[\cdot] \,,\\
        A^{10}_{x} &\approx \sqrt{\epsilon} L(x)\!\qquad\qquad f^{10}_{x}[\cdot] \approx \sqrt{\epsilon}(\psi^\dagger(x))[\cdot]\,, \\
        A^{01}_{x} &\approx \sqrt{\epsilon} R(x)\!\qquad \qquad f^{01}_{x} [\cdot] \approx \sqrt{\epsilon}[\cdot](\psi(x))\,,\\
        A^{11}_{x}   &\approx T(x) \qquad\qquad\quad\,  f^{11}_{x} [\cdot]   \approx {\epsilon}\psi^\dagger(x)[\cdot]\psi(x)\,,
    \end{aligned}
\end{equation}
and all other matrices $A^{ij}_x$ are fixed by consistency. We then take the limit $N\to\infty,\epsilon\to 0$ while keeping $\ell$ constant, which gives ~\eqref{eq: cMPO-ansatz}. It is important that $A^{11}_x$ is $O(1)$ and not necessarily close to $\openone_{D}$ like $A^{00}_x$.

To show that $O$ satisfies (iii), we note that given two cMPOs $O_i$ with bond dimension $D_i$ ($i=1,2$) that has the form \eqref{eq: cMPO-ansatz}, the product $O=O_1O_2$ also takes the same form, with bond dimension $D\leq D_1D_2$ and the local tensors are given by 
\begin{align}
    \label{eq: product-of-MPO-tensors}
    B  &= B_1\otimes B_2\,,\qquad T = T_1\otimes T_2\,,\notag\\
    Q &= Q_1\otimes \openone_{D_2} + \openone_{D_1}\otimes Q_2 + R_1\otimes L_2\\
    L &= L_1\otimes \openone_{D_2} + T_1\otimes {L_2} \,,\quad R = \openone_{D_1}\otimes R_2 + {R_1} \otimes T_2\,.\notag
\end{align}
This can be proven from the Dyson series expansion using the product relations between $l_x,r_x,\Ad_x$ (See Section B of Appendix \cite{suppmat}). Consequently, $O_1O_2$ indeed takes the form of \eqref{eq: cMPO-ansatz} with bond dimension $D\leq D_1D_2$ and (iii) is satisfied. 

Definition~\ref{definition: cMPO-ansatz} can be used to show that a cMPO preserves the entanglement area law natively in the continuum: that is, a cMPO $O_1$ with bond dimension $D_1$ maps a cMPS $\ket{\psi[B_2,Q_2,L_2]}$ with bond dimension $D_2$ to a  cMPS $\ket{\psi[B,Q,L]}$ with bond dimension $D\leq D_1D_2$. This follows from the fact that according to Definition~\ref{definition: cMPS}, we can write the cMPS as $\ket{\psi[B_2,Q_2,L_2]}=O_2\ket{\Omega}$ for some operator $O_2$ acting on the vacuum. It can be shown that $O_2$ is a cMPO according to Definition~\ref{definition: cMPO-ansatz} (See Section C of Appendix \cite{suppmat}).

There are some gauge freedom in the definition of cMPO if we view the path-ordered exponential
\begin{align}
    W_{x}^y\coloneqq \mathcal{P}e^{\int_x^y\dd z\,\mathfrak{L}_z[\cdot]}
\end{align}
as a kind of Wilson line found in non-Abelian gauge theory, generated by matrix-valued ``gauge field'' $\mathfrak{L}_x$. A sufficient condition for two sets of matrices to generate the same cMPO is that $ \Tr[\tilde{B}\tilde{W}_-^+] = \Tr[BW_-^+]$, i.e, that the ``traced Wilson line'' are invariant under the local gauge transformation \footnote{In practice, one may also need to impose additional smoothness conditions on $g$ (such as $k$-differentiability) to ensure that the the cMPO action on a cMPS does not worsen the well-behavedness of the initial cMPS  \cite{haegeman2013cmps}.}
\begin{equation}
    \begin{aligned}
        \tilde{W}_x^y &= g(y)W_x^y g(x)^{-1}\,,\qquad g(y)\in GL_D(\C)\,,\\
        \tilde{B} &= g(x_-)Bg(x_+)^{-1}
    \end{aligned}
\end{equation}
where $x_\pm$ are the endpoints, conventionally taken to be $x_\pm = \pm \ell/2$ for a box of length $\ell$. To see how the tensors transform under the local gauge transformation, we use the fact that the operator $\mathfrak{L}_x$ is formally a generator of $W_x^y$ satisfying the first-order differential equation
\begin{align}
    \frac{\dd W}{\dd y} = \mathfrak{L}_y W\,,
\end{align}
which gives the following transformations for the tensors:
\begin{align}
    \label{eq: gauge-transformation}
    Q(x) &= g(x)^{-1}\tilde{Q}(x)g(x) + g(x)^{-1}\frac{\dd g}{\dd x}\,,\notag \\
    T(x) &= g(x)^{-1}\tilde{T}(x)g(x)\,,\\
    L(x) &= g(x)^{-1}\tilde{L}(x)g(x)\,, \quad R(x) = g(x)^{-1}\tilde{R}(x)g(x)\notag\,.
\end{align}
Notably, $Q(x)$ transforms like the vector potential in non-Abelian gauge theory.

\paragraph*{Application: cMPU.} 

As a non-trivial application of the cMPO ansatz, we show that we can construct several interesting families of cMPUs (unitary cMPOs), where unitarity is known to impose a strong constraint on the structure of MPOs. These families are natural continuum limit of non-uniform MPUs beyond the QCA family \cite{Styliaris2025matrixproduct}. In discrete settings, uniform MPUs are known to be equivalent to translationally-invariant QCAs \cite{cirac2017mpu}, and by allowing non-trivial boundary we can have MPUs beyond QCAs \cite{Styliaris2025matrixproduct}. 

The first example is the displacement operator
\begin{align}
    D(\alpha) \coloneqq e^{\int\dd x\,\alpha(x)\psi^\dagger(x) - \alpha^*(x)\psi(x)}\,,
    \label{eq: displacement-operator}
\end{align}
a $D=1$ cMPU with $B = 1$, $T = 1$, $Q = -\frac{1}{2}|\alpha(x)|^2$, $L= \alpha(x)$, and $R = -\alpha^*(x)$. This follows by noting that the supermaps $l_x,r_x,\Ad_x,\Id$ are pairwise commuting, and
\begin{align}
    \mathcal{P}e^{\int\dd x\,\Ad_x[\cdot]}[\ketbra{\Omega}{\Omega}] = \sum_{j=0}^\infty\int D^jx\,\Ad_x^j[\Omega] = \openone\,. 
    \label{eq: identity-cMPU}
\end{align}
We write $\Ad^j_x(\Omega) \coloneqq  \psi^\dagger(x_1)...\psi^\dagger(x_j)\ketbra{\Omega}{\Omega}\psi(x_j)...\psi(x_1)$ for convenience. This also shows that $\openone$ is a $D=1$ cMPU with tensors $B=T=1,Q=L=R=0$, so we know that the set of cMPUs are non-empty.

Similar to the discrete MPU setting, in general we cannot check the unitarity of a cMPO by looking at the local tensors alone. In \cite{cirac2017mpu,Styliaris2025matrixproduct}, a characterization for MPUs is given by formulating unitarity as a condition over auxiliary space. Below we provide an analogous condition.
\begin{lemma}
\label{lemma: unitarity-v1}
Let $O$ be a cMPO with bond dimension $D$ and let
\begin{align*}
    O^\dagger O &= \Tr_D(B_+\mathcal{P}e^{\int\dd x\,\mathfrak{L}_+(x)[\cdot]})[\Omega]\\
    O O^\dagger &= \Tr_D(B_-\mathcal{P}e^{\int\dd x\,\mathfrak{L}_-(x)[\cdot]})[\Omega]
\end{align*}
be two cMPOs with bond dimension $\leq D^2$ where
\begin{align*}
    \mathfrak{L}_\pm &\coloneqq Q_\pm \otimes \Id + L_\pm \otimes l_x + R_\pm \otimes r_x + T_\pm \otimes \Ad_x\,.
\end{align*}
Let $V_{x,\pm}^y$ be the free propagator generated by $Q_\pm$ and
\begin{align*}
    K^L_{j,\pm}\coloneqq L_\pm(x_j)\,,\quad K^R_{ j,\pm} \coloneqq R_\pm(x_j)\,,\quad K^A_{j,\pm}\coloneqq T_\pm(x_j)\,.
\end{align*}
Then $O$ is unitary if and only if for all $j\in \mathbb{N}\cup\{0\}$, 
\begin{align}
    \label{eq: identity-wavefunctions}
    &\Tr\rr{B_\pm V_{-,\pm}^1 K_{1,\pm}^{\alpha_1}V_{1,\pm}^2 K_{2,\pm}^{\alpha_2} \cdots K_{j,\pm}^{\alpha_j} V_{j, \pm}^+ } 
    \notag\\
    &=
    \begin{cases}
        1\qquad &\alpha_k = A\quad \forall 1\leq k\leq j\\
        0\qquad &\exists 1\leq k\leq j: \alpha_k\neq A
    \end{cases}
\end{align}
for any $x_1,x_2,...,x_j\in I$. 
\end{lemma}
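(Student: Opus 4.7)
The plan is to reduce the unitarity of $O$ to the two equalities $O^\dagger O = \openone$ and $OO^\dagger = \openone$, and then match both expansions term by term against the cMPO presentation of the identity given in Eq.~\eqref{eq: identity-cMPU}. First, I would observe that $O^\dagger$ is itself a cMPO of bond dimension $D$ (obtained by taking the Hermitian adjoint of the ansatz, which swaps the left and right ladder maps), so by property (iii) both $O^\dagger O$ and $OO^\dagger$ are cMPOs of bond dimension at most $D^2$, justifying the forms assumed in the lemma with tensors $(B_\pm,Q_\pm,L_\pm,R_\pm,T_\pm)$.

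Next, I would expand the two $\pm$ cMPOs as Dyson series, absorbing the $Q_\pm\otimes\Id$ pieces into the auxiliary free propagators $V^y_{x,\pm}$ and treating the remaining contributions as point-like ``jumps'' at positions $x_1<\cdots<x_j$ of types $\alpha_k\in\{L,R,A\}$. The expansion of $O^\dagger O$ then takes the form
\begin{align*}
    O^\dagger O = \sum_{j=0}^\infty\sum_{\vec\alpha}\int D^j x\,&\Tr\!\rr{B_+ V_{-,+}^{1}K_{1,+}^{\alpha_1}V_{1,+}^{2}\cdots K_{j,+}^{\alpha_j}V_{j,+}^{+}}\\
    &\times s_{x_1}^{\alpha_1}\circ\cdots\circ s_{x_j}^{\alpha_j}[\ketbra{\Omega}{\Omega}],
\end{align*}
with the analogous expression for $OO^\dagger$, where I write $s_x^L\equiv l_x$, $s_x^R\equiv r_x$, $s_x^A\equiv\Ad_x$. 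Each physical-space factor is easily evaluated to be the rank-one Fock operator $\ketbra{y}{z}$, in which the creation positions $y$ are the path-ordered subsequence of $\vec x$ labeled by $\alpha_k\in\{L,A\}$ and the annihilation positions $z$ the subsequence labeled by $\alpha_k\in\{R,A\}$. Since Eq.~\eqref{eq: identity-cMPU} exhibits $\openone$ as the same kind of expansion restricted to the all-$A$ sector with unit trace factor, both implications of the lemma reduce to the uniqueness of this Dyson expansion: plugging the wavefunction conditions into the expansion of $O^\dagger O$ immediately reassembles $\openone$, so the real content lies in the converse.

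The main technical step, and the expected obstacle, is establishing this uniqueness. My plan is to project onto fixed particle-number sectors $(p,q)$ and then distinguish the distributional structure contributed by each $(j,\vec\alpha)$: a label $\alpha_k=A$ forces a Dirac delta identifying a creation and an annihilation argument, whereas labels $L$ and $R$ contribute independent creation-only and annihilation-only variables, so different $(j,\vec\alpha)$ patterns produce operator-valued kernels of pairwise distinct singular support. Equivalently, one can extract each coefficient by computing matrix elements $\bra{y_1,\ldots,y_p}(\cdot)\ket{z_1,\ldots,z_q}$ between Fock basis states and applying Wick's theorem: the resulting linear system is triangular in the number of $y$--$z$ coincidences and hence uniquely invertible. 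Uniqueness then forces the all-$A$ trace factor to equal $1$ for every $j$ and every ordered tuple $\vec x$, and every other trace factor to vanish identically, which is precisely Eq.~\eqref{eq: identity-wavefunctions} in the $+$ case; the $-$ case follows identically from $OO^\dagger=\openone$. The hardest part will be carrying out this uniqueness argument rigorously at the level of operator-valued distributions, especially in handling the coincident-argument singularities induced by $A$-type jumps.
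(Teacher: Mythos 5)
Your proposal is correct and takes essentially the same route as the paper: both reduce unitarity to matching the Dyson-series coefficients of $O^\dagger O$ and $OO^\dagger$ term by term against the all-$A$, unit-coefficient expansion of $\openone$ in Eq.~\eqref{eq: identity-cMPU}. The paper asserts this matching in a single sentence, whereas you correctly isolate and sketch the linear-independence (uniqueness of coefficients across the $(j,\vec\alpha)$ patterns) step that the paper leaves implicit.
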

\noindent This follows because if $O$ is unitary then the matrix product coefficients of both $O^\dagger O$ and $OO^\dagger$ must produce the same coefficients as the cMPU for $\openone$ in Eq.~\eqref{eq: identity-cMPU}, so it cannot contain any $L,R$ matrices and the coefficients containing only $K^{A}_\pm$ must be all equal to 1. Lemma~\ref{lemma: unitarity-v1} is simpler to use for bulk-uniform matrices. In this case, we can rewrite the matrices in the ``interaction picture'' with respect to $Q$, so that $\mathcal{K}^\alpha_\pm(x) \coloneqq V_{x_-,\pm}^{x}K_\pm^\alpha V_{x,\pm}^{x_-}$ and $\mathcal{B}_\pm(\ell)\coloneqq V_{x_-,\pm}^{x_+}B_\pm$, in which case the unitarity condition becomes
\begin{align}
    \hspace{-0.2cm}\Tr\rr{\mathcal{B}_\pm(\ell) \mathcal{K}^{\alpha_1}_\pm(x_1)...\mathcal{K}^{\alpha_j}_\pm(x_j) } 
    &=
    \begin{cases}
        1\qquad &\alpha_k = A\quad \forall k\\
        0\qquad &\exists k: \alpha_k\neq A
    \end{cases}
\end{align}

Below we consider mainly three different families of cMPUs (See Section D of Appendix \cite{suppmat} for derivations and other cMPU families). The first is the \textit{phase cMPUs} based on phase MPUs \cite{Styliaris2025matrixproduct}, which has $L=R=0$. For example, the following tensors 
\begin{align*}
    Q(x) &= i\mathrm{diag}\rr{q_1(x),...,q_D(x)}\quad T(x) \in U(1)^D \rtimes S_D\,,\\
    B &= V_{x_+}^{x_-}\cketbra{k}{+}\qquad k = 0,1,2...,D-1\,,
\end{align*}
give rise to a phase cMPU. Here $U(1)^D\equiv \bigoplus_{j=1}^D U(1)$, $S_D$ is the symmetric group of $D$ elements, $V_{x_+}^{x_-}$ is defined in \eqref{eq: cMPS-coefficients}, $q_j,t_j$ are real-valued functions. More concretely, $T(x)$ is a generalized permutation matrix $T(x) = e^{i\mathrm{diag}\rr{t_1(x),...,t_D(x)}}P$ with $P\in S_D$. Here we use rounded braket notation $\cketbra{k}{+}$ for vectors in the bond space and $\cket{+} = (1,1,1...,1)$. To get a better sense of these objects, consider the special case $D=2$ with $B\propto \cketbra{0}{+}$, $Q=i\frac{\omega}{2}Z$, $T = X$. The resulting cMPU has an explicit form involving a string operator \footnote{Here we pick the convention that $\Theta(0) = 0$ (left-continuous) so that $[\Pi(x),\psi^\dagger(x)]=0$ for all $x\in [-\ell/2,\ell/2]$.}, namely (See Section D of Appendix \cite{suppmat} for the derivation)
\begin{align}
    U_\theta = e^{-i\omega K}\,,\quad K =  \int_{x}^{\ell/2}\dd x\,x (-1)^{\Pi(x)}n(x)
\end{align}
where $\Pi(x)\coloneqq \int_x^{\ell/2}\dd z\,n(z)$ and $n(x)\coloneqq \psi^\dagger(x)\psi(x)$. 

The second family of cMPU can be obtained by modifying the phase unitary in order to go beyond diagonal unitaries. We use the fact that any qubit MPU whose vectorization admits a non-trivial compression as a locally maximally entanglable (LME) state \cite{kraus2009lme} is locally unitary-equivalent to a phase unitary \cite{Styliaris2025matrixproduct}. Since the phase cMPU $U_\theta$ is based on qubit MPUs, we can construct an infinite family of cMPUs that are not diagonal by concatenating $U_\theta$ with displacement operator that take the role of local unitaries (See Section D of Appendix \cite{suppmat}).

Last but not least, the third example we consider is based on \cite[Example 14]{Styliaris2025matrixproduct}, where a class of MPU called ``unitary action over product subspace'' is described by a $D=5$ MPU
\begin{align}
    U = \openone^{\otimes N} + \sum_{i,j=0}^1 (V_{ij}-\delta_{ij})\ket{i}^{\otimes N}\bra{j}^{\otimes N}\,,
    \label{eq: MPU-over-product-subspace}
\end{align}
where $U$ acts as $V\in U(2)$ over the two-dimensional subspace $\Span\{\ket{0}^{\otimes N},\ket{1}^{\otimes N}\}$. This was constructed as an example where bulk-uniform MPU with non-trivial boundary does not in general admit a block-diagonal decomposition for its local tensors \cite{Styliaris2025matrixproduct}. This particular example does \textit{not} generalize to the continuum, but we can modify this slightly to give a similar cMPU, namely
\begin{align}
    U = \openone + \sum_{i,j=1}^2 \rr{V_{ij}-\delta_{ij}}\ketbra{\psi_i}{\psi_j}
\end{align}
where $\ket{\psi_{i}}\equiv \ket{\psi_i[B_i,Q_i,L_i]}$ are pairwise orthonormal cMPS with bond dimension $D=D_i$. The idea is to use the cMPO ansatz \eqref{eq: cMPO-ansatz} to show that the operator
\begin{align}
    P_j = \ketbra{\psi_i[B_i,Q_i,L_i]}{\psi_j[B_j,Q_j,L_j]}
\end{align}
is a cMPO with bond dimension $D_iD_j$, so the cMPU has bond dimension $D=1+\sum_{i,j=1}^2 D_iD_j$. An analog to the above MPU in \eqref{eq: MPU-over-product-subspace} can be constructed by choosing $\ket{\psi_1} = \ket{\Omega}$ and $\ket{\psi_2}$ to be normalized one-particle Fock states $\ket{1_{f}}\coloneqq \int\dd x\,f(x)\psi^\dagger(x)\ket{\Omega}$, which is a cMPS with bond dimension $D=2$ with $B_2\propto \sigma^-,Q_2=0$ and $L_2=f(x)\sigma^+$ (See Section D of Appendix \cite{suppmat}).

\paragraph*{Discussion and outlook.} 
In this work we have introduced the family of cMPOs for bosonic quantum fields. They admit a closed-form expression in terms of path-ordered exponential of finitely many matrix-valued functions, they are obtained as a suitable continuum limit of discrete MPOs and by construction they preserve entanglement area-law directly in the continuum, and they map cMPS to cMPS. As an application we constructed a continuum limit of MPUs beyond QCA studied in \cite{Styliaris2025matrixproduct}.

Several open questions arise from these considerations. First, the cMPO ansatz \eqref{eq: cMPO-ansatz} provides a natural starting point for studying systematically the continuum limit of MPDOs \cite{verstraete2004mpdo,zwolak2004mpdo,Cuevas2013mpdo,verstraete2018convex,liu2025parent,kato2024exact}, i.e., continuous matrix product density operators (cMPDO): indeed, convex combinations of cMPS projectors form a subclass of cMPDOs. Second, since the cMPOs are in general non-Gaussian, they allow explorations into non-Gaussian operations in (1+1)-dimensional quantum field theories. Third, ideally one would like to obtain the generalization to fermionic setting natively as a continuum limit of fermionic tensor networks \cite{piroli2021fermionic,verstraete2024fermionic}, and also generalization to higher dimensions so that it preserves area-law entanglement of continuous tensor networks \cite{tilloy2019cTNS}. Last but not least, it would be interesting to see if it is possible to formulate a notion of cMPO symmetry using the cMPO ansatz.

We mention in passing that the cMPOs defined in this work seem to exclude some physically interesting operators in the context of quantum field theory, such as translation or unitaries generated by field derivatives. In principle, the ansatz can be generalized to cover such operators, although it is no longer automatic that the resulting cMPO arises from some continuum limit of discrete MPO. We provide some heuristic examples in Section E of Appendix \cite{suppmat} and we leave the general theory of such ``generalized cMPOs'' for future work.

\paragraph*{Acknowledgment.} The authors are grateful to Rahul Trivedi, Georgios Styliaris, Marta Florido-Llinás for insightful discussions. E.T. acknowledges support from the Alexander von Humboldt Foundation through the Humboldt Research Fellowship. J.I.C acknowledges funding by THEQUCO as part of the Munich Quantum Valley, which is supported by the Bavarian state government with funds from the Hightech Agenda Bayern Plus.

\bibliography{ref}

\appendix

\section{Fock representation of MPO and cMPO ansatz}

In this section we first review how the Fock representation works for MPS in order to see how it works for MPO. Assume that we have an $N$-qudit system on a lattice with spacing $\epsilon$ so that $\ell=N\epsilon$ is the total length or volume of the system. Let $\ket{\Omega_N} \coloneqq \ket{0}^{\otimes N}$ be the all-zero (``vacuum'') state and let $J^\pm$ be (normalized) ladder operators of $\mathfrak{su}(2)$ algebra where in the spin $j=\frac{1}{2}(D-1)$ representation we have $J^\pm \in M_D(\C)$. 

Let us use the shorthand $\mathsf{V}_{i}^{j}\coloneqq A^{0}_{x_i}A^{0}_{x_i+1}...A^{0}_{x_j}$ which takes the role of the free propagator $V_{x}^y$ for cMPS. We will use the notation $A^{i}_j\equiv A^{i}_{x_j}$ interchangeably in the discrete case. We can then express a qudit cMPS as sums over $N$-particle sectors
\begin{align}
    \ket{\psi[B,\{A_x^{i}\}]} &\propto \sum_{i_1...i_N}\Tr(BA^{i_1}_1...A^{i_N}_N)(J^{+}_1)^{i_1}...(J^{+}_{N})^{i_N}\ket{\Omega_N}\notag\\
    &\equiv \sum_{n=0}^N \ket{\Psi_n}
\end{align}
where $\ket{\Psi_n}\in \mathcal{H}^{\odot n}$, i.e., they are vectors in the symmetric $n$-particle subspace of the Fock space $\mathfrak{F}(\mathcal{H})$. The first three sectors are
\begin{equation}
\label{eq: MPS-n-particle-sectors}
\begin{aligned}
    \ket{\Psi_0} &= \Tr(B\mathsf{V}_1^N)\ket{\Omega_N}\\
    \ket{\Psi_1} &= \sum_{j=1}^N \Tr({B\mathsf{V}_{1}^{j-1}A^{1}_j \mathsf{V}_{j+1}^N}) J^+_j\ket{\Omega_N}\\
    \ket{\Psi_2} &= \sum_{i < j= 1}^N \Tr({B\mathsf{V}_{1}^{i-1}A^{1}_i \mathsf{V}_{i+1}^{j-1}A^{1}_j  \mathsf{V}_{j+1}^N }) J^+_iJ^+_j\ket{\Omega_N}\\
    &\quad + \sum_{j=1}^N \Tr({B\mathsf{V}_{1}^{j-1}A^{2}_j \mathsf{V}_{j+1}^N}) (J^+_j)^2\ket{\Omega_N}\,,
\end{aligned}
\end{equation}
and the higher particle sectors proceed similarly. Essentially, each vector $\ket{\Psi_n}$ contains every term in the MPS ansatz that contains $n$ creation operators applied to the vacuum $\ket{\Omega_N}$. 

As stated, in Eq.~\eqref{eq: MPS-n-particle-sectors} we have only partitioned the standard MPS ansatz into $N+1$ partitions based on how the particle content. The next step is to rewrite Eq.~\eqref{eq: MPS-n-particle-sectors} in a form that closely parallels the cMPS ansatz \eqref{eq: cMPS-Fock} as path-ordered sums. For $n=0,1$ they are already in the correct form, so let us do this for $n=2$ and $n\geq 3$ works similarly. Then we have 
\begin{align}
    \ket{\Psi_2} &= \sum_{j=1}^N \sum_{i<j} \Tr({B\mathsf{V}_{1}^{i-1}A^{1}_i \mathsf{V}_{i+1}^{j-1}A^{1}_j  \mathsf{V}_{j+1}^N }) J^+_iJ^+_j\ket{\Omega_N}\notag\\
    &\quad + \sum_{j=1}^N  \Tr({B\mathsf{V}_{1}^{j-1}A^{2}_j \mathsf{V}_{j+1}^N}) (J^+_j)^2\ket{\Omega_N} 
\end{align}
The full qudit MPS can thus be written as 
\begin{align}
   \ket{\psi[B,\{A_x^{i}\}]}
   &= \sum_{m=0}^{N} \sum_{i_1< ... < i_m}\!\!\!\!\!\!\! { c_{i_1...i_m}} J^+_{1}\cdots J_{m}^+\ket{\Omega_N}\notag \\
   & \quad + \mathcal{R}_{N,d}\,,
   \label{eq: Fock-representation-MPS}
\end{align}
where
\begin{align}
    c_{i_1...i_n} \coloneqq \Tr(BV_{1}^{i_1-1}A^1_{i_1}V_{i_1+1}^{i_2-1}A^1_{i_2}...A^{1}_{i_n}V_{i_n+1}^{N})
\end{align}
and $\mathcal{R}_{N,d}$ contains all remaining terms that depend on at least one $A^{i\geq 2}_{x_j}$, such as
\begin{align*}
     &\Tr\rr{B\mathsf{V}_{1}^{i-1}A^{2}_{i} \mathsf{V}_{i+1}^N} (J^+_{i})^2\ket{\Omega_N}\\
     &\Tr\rr{B\mathsf{V}_{1}^{j-1}A^{3}_{j} \mathsf{V}_{j+1}^{k-1}A^{1}_{k} \mathsf{V}_{k+1}^{N}} (J^+_{j})^3J^+_{k}\ket{\Omega_N}\,.
\end{align*}
We refer to Eq.~\eqref{eq: Fock-representation-MPS} as the \textit{Fock representation} of the MPS $\ket{\psi[B,{A^i_x}]}$. 

The continuum limit for MPS is defined by rescaling
\begin{equation}
    \begin{aligned}
        A^0_{x} &\sim \openone + \epsilon Q(x)\,,\\
        A^{1}_{x}&\sim \sqrt{\epsilon}L(x)\,,\qquad J^+_{x}  \sim \sqrt{\epsilon}\psi^\dagger(x)\,,
    \end{aligned}
\end{equation}
and then take the limit $N\to\infty,\epsilon\to 0$ while keeping $\ell$ constant. The continuum limit of the all-zero state $\ket{\Omega_N}\coloneqq \ket{0}^{\otimes N}$ is given by the Fock vacuum $\ket{\Omega_N}\to\ket{\Omega} \in \mathfrak{F}(\mathcal{H})$ which satisfies $\psi(x)\ket{\Omega_N} = 0$ for all $x$. When we take the continuum limit, consistency requires that in every $m$-particle sector we identify
\begin{align}
    \lim_{\epsilon\to 0}&\sum_{i_1< ... < i_m}{ c_{i_1...i_n}} J^+_{i_1}\cdots J_{i_m}^+\ket{\Omega_{N}} \notag\\
    &\sim \frac{1}{m!}\sum_{i=1}^N \Tr(B\mathsf{V}_{1}^{i}(A^1_i)^m\mathsf{V}_{i}^{N})(J^+_{i})^m\ket{\Omega_N} \notag\\
    &\equiv \sum_{i=1}^N \Tr(B\mathsf{V}_{1}^{i}A^m_j\mathsf{V}_{i}^{N})(J^+_{i})^m\ket{\Omega_N} 
\end{align}
which enforces that the local dimension becomes infinite-dimensional and that
\begin{align}
    A^{m}_j = \frac{1}{m!}(A^1)^m \sim \frac{1}{m!}(\sqrt{\epsilon}L(x_j))^m\,.
\end{align}
Consequently, the only freedom we have for the continuum limit of a cMPS is in specifying $A^0,A^1$. This can be interpreted as saying that a cMPS is naturally a continuum limit of qubit MPS as the bond matrices $A^{n\geq 2}$ are not specified independently. This does not imply that qudit MPS has no continuum limit: it means that Definition~\ref{definition: cMPS} needs to be generalized, see \cite{gemma2018continuum,gemma2020generalizedMPS}.

For MPOs it is slightly more involved. Using the ladder maps Eq.~\eqref{eq: ladder-maps}, we can write any MPO as 
\begin{align}
    &O_N[B,\{A^{ij}_x\}] \notag\\
    &= \sum_{m=0}^N  \sum_{\substack{x_1< ... < x_m}}\!\!\!\!\!\! \Tr\rr{B\mathsf{W}_{1}^{x_1-1}A_{x_1}^{i_1j_1}...A_{x_m}^{i_m j_m} \mathsf{W}_{x_m+1}^m}\notag\\
    &\hspace{3cm}\times  f^{i_1j_1}_{x_1}...f^{i_mj_m}_{x_m}\bigr[\ketbra{\Omega_N}{\Omega_N}\bigr] \notag\\
    &\quad + \mathcal{S}_{N,d}\,,
\end{align}
which is Eq.~\eqref{eq: Fock-representation-MPO}, where for fixed $m$ the summation contains exactly $m$ ladder operators $J^\pm_{x_k}$ and $i_k,j_k = 0,1$. The remainder $\mathcal{S}_{N,d}$ contains all terms that depend on at least one $A^{ij}_{x_j}$ with either $i,j\geq 2$, such as
\begin{align*}
    &\Tr\rr{B\mathsf{W}_{1}^{x_j-1}A^{20}_{x_j} \mathsf{W}_{x_j+1}^{x_k-1}A^{01}_{x_k}\mathsf{W}_{x_k+1}^{N}} (J^+_{x_j})^2\ketbra{\Omega_N}{\Omega_N}(J^-_{x_k}) \\
    &\Tr\rr{B\mathsf{W}_{1}^{x_j-1}A^{13}_{x_j} \mathsf{W}_{x_j+1}^N} (J^+_{x_j})\ketbra{\Omega_N}{\Omega_N}(J^-_{x_j})^3\\
    &\Tr\rr{B\mathsf{W}_{1}^{x_j-1}A^{22}_{x_j} \mathsf{W}_{x_j+1}^N} (J^+_{x_j})^2\ketbra{\Omega_N}{\Omega_N}(J^-_{x_j})^2\,.
\end{align*}
We refer to Eq.~\eqref{eq: Fock-representation-MPO} also as the Fock representation of the MPO $O[B,{A^{ij}_x}]$.

Following the same argument for the cMPS, it follows that the continuum limit for MPO is defined by rescaling 
\begin{equation}
    \begin{aligned}
        A^{00}_{x} &\approx \openone_D + \epsilon Q(x) \qquad f^{00}_{x}[\cdot] =  \Id[\cdot] \,,\\
        A^{10}_{x} &\approx \sqrt{\epsilon} L(x)\!\qquad\qquad f^{10}_{x}[\cdot] \approx \sqrt{\epsilon}(\psi^\dagger(x))[\cdot]\,, \\
        A^{01}_{x} &\approx \sqrt{\epsilon} R(x)\!\qquad \qquad f^{01}_{x} [\cdot] \approx \sqrt{\epsilon}[\cdot](\psi(x))\,,\\
        A^{11}_{x}   &\approx T(x) \qquad\qquad\quad\,  f^{11}_{x} [\cdot]   \approx {\epsilon}\psi^\dagger(x)[\cdot]\psi(x)\,,
    \end{aligned}
\end{equation}
and for $i+j \geq  3$ we have
\begin{equation}
\begin{aligned}
    f^{ij}_{x} [\cdot]   &\sim \sqrt{\epsilon}^{i+j}(\psi^\dagger(x))^{i}[\cdot](\psi(x))^{j}\,.
\end{aligned}
\end{equation}
We then take the limit $N\to\infty,\epsilon\to 0$ while keeping $\ell$ constant. Crucially, $A^{11}_x$ scaling is required to be $\mathcal{O}(1)$ and is not necessarily close to $\openone_D$ unlike $A^{00}_x$. Just like the cMPS, the scalings for $A^{ij}_{x}$ with $i+j\geq 3$ are fixed consistency in the continuum limit and only $A^{00}_x,A^{10}_x,A^{01}_x,A^{11}_x$ are independent matrices, so this prescription treats $A^{mn}_x$ for $m+n\geq 3$ as being obtained from only appropriate combinations of these four sets of matrix-valued functions. Under such prescription, the continuum limit is given by Eq.~\eqref{eq: cMPO-ansatz}. Since the only freedom we have for to take the continuum limit is in specifying four bond matrices $A^{00},A^{01},A^{10},A^{11}$ of the corresponding MPO, this can be interpreted as saying that a cMPO is naturally a continuum limit of {qubit} MPO. In principle, this also does not imply that only qubit MPOs have a continuum limit. We leave the possibility of generalizing the cMPO ansatz to allow for continuum limit of qudit MPOs in the same way it was done for cMPS in \cite{gemma2020generalizedMPS} to future work.

\section{Product relations} 
To show that the product of two cMPO is another cMPO with bond matrices given by \eqref{eq: product-of-MPO-tensors}, we rely on the fact that only these products contribute in the Dyson series expansion:
\begin{align}
    \label{eq: field-projectors}
    \Omega\cdot r_x(\Omega) &= r_x(\Omega)\,, \quad \Omega\cdot\Omega = \Omega\,,\notag\\
    l_x(\Omega)\cdot r_y(\Omega) &= l_x r_y(\Omega)\,,\quad 
    l_x(\Omega)\cdot\Omega = l_x(\Omega)\notag\\
    r_x(\Omega)\cdot l_y(\Omega) &= \delta_{xy}\Omega\,,\quad
    r_x(\Omega)\cdot\Ad_y(\Omega) = \delta_{xy}r_x(\Omega)\,,\notag\\
    \Ad_x(\Omega)\cdot l_y(\Omega) &= \delta_{xy}l_x(\Omega)\,,\\ 
    \Ad_x(\Omega)\cdot\Ad_y(\Omega) &= \delta_{xy}\Ad_x(\Omega)\notag\,.
\end{align}
We used the shorthand $\Omega\coloneqq \ketbra{\Omega}{\Omega}$ and $\delta_{xy}\equiv \delta(x-y)$. 

\section{Area-law preservation for cMPS} To show that a cMPO preserves entanglement area-law directly in the continuum, we first show that an operator $O'$ of the form
\begin{align*}
    {O}' = \Tr_D(B\mathcal{P}e^{\int_I \dd x\,Q(x)\otimes \openone + L(x)\otimes \psi^\dagger(x) + R(x) \otimes \psi(x)})
\end{align*}
can be recast into the form given in Eq.~\eqref{eq: cMPO-ansatz}, by choosing $B,Q,L,R\in M_D(\C)$ to be the same as those that appear in $O'$ and setting $T = \openone_D$. To see this, note that since the ladder maps commute, i.e.,
\begin{align}
    [l_x,\Ad_x] = [r_x,\Ad_x] = [l_x,r_x] = 0\,, 
    \label{eq: ladder-maps-commute}
\end{align}
the path-ordered exponential in the ansatz \eqref{eq: cMPO-ansatz} can be decomposed into two parts, namely 
\begin{align*}
    \mathcal{P}e^{\int\dd x\,\mathfrak{L}_x[\cdot]} &= \mathcal{F} \circ \mathcal{F}_{\Ad}\,,\\
    \mathcal{F}_{\Ad} [\cdot] &\coloneqq \mathcal{P}e^{\int\dd x\,\openone_D\otimes \Ad_x}[\cdot] = \openone_D\otimes\openone\,,\\
    \mathcal{F}[\cdot] &\coloneqq  \mathcal{P}e^{\int\dd x\,Q(x)\otimes \Id[\cdot] + L(x)\otimes l_x[\cdot]+R(x)\otimes r_x[\cdot]}\,.
\end{align*}
Using Dyson series expansion we see that $O'=\Tr_D(B\mathcal{F} [\openone_D\otimes\openone])$ and it allows us to write a cMPS as
\begin{align}
    \ket{\psi[B_2,Q_2,L_2]} \equiv O_{2}[B_2,Q_2,{L_2,0,\openone_{D_2}}]\ket{\Omega}\,,
\end{align}
which shows that a cMPS is a cMPO acting on $\ket{\Omega}$. Using Eq.~\eqref{eq: field-projectors} we can compute the cMPO for $O_1O_2$ with local tensors given according to the product rule \eqref{eq: product-of-MPO-tensors}. Observe that the tensors $R=R_1\otimes\openone_{D_2}$ and $T = T_1\otimes \openone_{D_2}$ associated with $O = O_1O_2$ do not contribute when acting on $\ket{\Omega}$. It follows that the resulting cMPS $\ket{\psi[B,Q,L]} = O_1O_2\ket{\Omega}$ has local tensors
\begin{equation}
\begin{aligned}
    B &= B_1\otimes B_2\,,\quad L = L_1\otimes \openone_{D_2} + {T_1 \otimes L_2}\\
    Q &= Q_1\otimes \openone_{D_2} + \openone_{D_1}\otimes Q_2 + R_1\otimes L_2\,.
\end{aligned}
\end{equation}
The bond dimension of the resulting cMPS is $D\leq D_1D_2$.

\section{Construction of cMPU families }

\subsection{Phase MPU and cMPU}
\label{sec: cMPU}

Here we will consider several families of cMPUs including the examples given in the main text and provide further details of their constructions. The first natural family of cMPUs is based on a subclass of diagonal unitaries called \textit{phase unitaries} \cite{Styliaris2025matrixproduct}. These will help us construct a large family of non-trivial cMPUs with $D\geq 2$. 
    
\subsubsection*{Phase MPU}

Our starting point is the concept of locally maximally entangleable (LME) states first introduced in \cite{kraus2009lme}.
\begin{definition}[LME state \cite{kraus2009lme}]
    A multipartite state $\ket{\Psi_{\mathsf{LME}}}$ is LME if there exists isometries 
    \begin{align*}
        V_j: \mathcal{H}'_{j} \to \mathcal{H}_{A,j}\otimes \mathcal{H}_{B,j}\,,
    \end{align*}
    where $\dim \mathcal{H}'_j=d_j'$ and $\dim \mathcal{H}_{\alpha, j} = d_j$ for $\alpha=A,B$, with $V_j^\dagger V_j = \openone_{d_j'}$ such that the state
    \begin{align}
        \ket{\tilde{\Psi}_\mathsf{LME}}_{AB} = \bigotimes_j V_j  \ket{\Psi_{\mathsf{LME}}}
    \end{align}
    is maximally entangled across the $AB$ bipartition. If $d_j=d,d_j'=d'$ for all $j$ then we say that $\ket{\Psi_{\mathsf{LME}}}$ is $(d,d')$-LME.
\end{definition}
\noindent Essentially, LME states are those that can be transformed to maximally entangled state by first appending local auxiliary degrees of freedom and then perform local unitaries on each system-ancilla pair. All product states have this property, but in general this is a non-trivial constraint on multipartite systems. 

The relevant object for us is the family of $(2,2)$-LME states that has been completely characterized in \cite{kraus2009lme} if we assume that $V_j$'s are control isometries (or equivalently, control unitaries with one input state fixed). This family was further generalized to states where $V_j$'s can be chosen to be any entangling isometries in \cite{Styliaris2025matrixproduct}. 
\begin{proposition}[Phase-LME \cite{kraus2009lme,Styliaris2025matrixproduct}]
    \label{proposition: (2,2)-lme}
    Up to local-unitary (LU) transformations, all $(2,2)$-LME states for which $V_j$'s can be chosen to be entangling isometries have the form
    \begin{align}
        \ket{\Psi_{\theta}} = \frac{1}{\sqrt{2^N}}\sum_{i_1=0}^1...\sum_{i_N=0}^1e^{i\theta_{i_1...i_N}}\ket{i_1...i_N} 
    \end{align}
    and conversely, every such state is $(2,2)$-LME where $V_j$'s can be chosen to be entangling isometries. We call these states phase-LME states.
\end{proposition}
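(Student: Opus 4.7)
The proposition has two directions. For the forward direction, given a phase-LME state $\ket{\Psi_\theta}$, I would construct explicit entangling isometries that witness its $(2,2)$-LME property. The natural choice is the ``copying'' isometry $V_j:\C^2\to\C^2_{A_j}\otimes\C^2_{B_j}$ defined by $V_j\ket{i}=\ket{i}_{A_j}\ket{i}_{B_j}$, which is an isometry (since $V_j\ket{0}$ and $V_j\ket{1}$ are orthonormal) and entangling (since $V_j\ket{+}=\tfrac{1}{\sqrt{2}}(\ket{00}+\ket{11})$ is maximally entangled). A direct computation then gives
\begin{align*}
    \bigotimes_j V_j\ket{\Psi_\theta}=\frac{1}{\sqrt{2^N}}\sum_{\mathbf{i}}e^{i\theta_\mathbf{i}}\ket{\mathbf{i}}_A\ket{\mathbf{i}}_B\,,
\end{align*}
which is already in Schmidt form with $2^N$ equal coefficients and hence maximally entangled across the $A|B$ bipartition.

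For the converse direction, assume $\ket{\Psi_\mathsf{LME}}$ is $(2,2)$-LME with some entangling isometries $V_j$. My plan is to exploit two distinct sources of freedom: (a) LU equivalence on the original system, $u_j:\C^2_j\to\C^2_j$, which changes $\ket{\Psi}\mapsto\bigotimes_j u_j\ket{\Psi}$ and $V_j\mapsto V_j u_j^\dagger$; and (b) LU on the output sides $U_{A_j}\otimes U_{B_j}$, which changes $V_j\mapsto(U_{A_j}\otimes U_{B_j})V_j$ without touching $\ket{\Psi}$ and preserves the maximal entanglement across $A|B$. Using both freedoms together, I would try to bring each $V_j$ into the copying form $V_j\ket{i}=\ket{i}_{A_j}\ket{i}_{B_j}$, whereupon the resulting state $\bigotimes_j V_j\ket{\Psi}=\sum_\mathbf{i}c_\mathbf{i}\ket{\mathbf{i}}_A\ket{\mathbf{i}}_B$ is maximally entangled across $A|B$ if and only if $|c_\mathbf{i}|^2=2^{-N}$ for every $\mathbf{i}$, which is exactly the phase-LME form.

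The hardest step is to justify the reduction of a generic entangling isometry $V_j$ to the copying form using only the LU freedoms (a) and (b). Naively this is impossible: a typical two-dimensional subspace of $\C^2_{A_j}\otimes\C^2_{B_j}$ that contains entangled states is \emph{not} $U_{A_j}\otimes U_{B_j}$-equivalent to $\Span\{\ket{00},\ket{11}\}$, so LU on the output alone cannot rotate an arbitrary $\mathrm{Im}(V_j)$ to the canonical subspace. The subtlety, which is the content of the generalization in \cite{Styliaris2025matrixproduct} beyond the control-unitary case of \cite{kraus2009lme}, is that the \emph{global} maximal-entanglement condition $\bigotimes_j\mathcal{E}_{A,j}(\Psi)=2^{-N}\openone_A$ with induced qubit channels $\mathcal{E}_{A,j}(\rho)=\Tr_{B_j}(V_j\rho V_j^\dagger)$ severely restricts which entangling isometries are compatible with an LME structure: only those whose induced channels are completely dephasing in a common basis survive, and these are precisely the ones LU-equivalent to the copying isometry. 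Establishing this structure theorem for the admissible $V_j$ is where the main technical work sits; once it is in hand, the phase-LME form follows immediately from the dephasing identity $|c_\mathbf{i}|^2=2^{-N}$.
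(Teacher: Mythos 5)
The paper does not actually prove this proposition: it is imported verbatim from \cite{kraus2009lme,Styliaris2025matrixproduct}, so there is no in-paper argument to compare against. Judged on its own terms, your proposal is half complete. The ``conversely'' direction is fine: the copy isometry $V_j\colon\ket{i}\mapsto\ket{i}_{A_j}\ket{i}_{B_j}$ is entangling, and applying $\bigotimes_j V_j$ to $\ket{\Psi_\theta}$ visibly produces a state in Schmidt form with $2^N$ coefficients of equal modulus $2^{-N/2}$, hence maximally entangled across $A|B$. That part stands.

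The classification direction, however, contains a genuine gap, and you have located it yourself: the entire content of the theorem is the ``structure theorem'' you invoke but do not prove, namely that the only entangling isometries compatible with the global condition $\Tr_B\bigl(\bigotimes_j V_j\,\Psi\,\bigotimes_j V_j^\dagger\bigr)=2^{-N}\openone_A$ are, up to the LU freedoms (a) and (b), the copy isometries. Asserting this and then deriving $|c_{\mathbf i}|^2=2^{-N}$ is circular as a proof of the proposition. Moreover, the intermediate statement you propose --- that each induced channel $\mathcal{E}_{A,j}$ must be completely dephasing in a common basis --- is itself nontrivial and not obviously the right factorization of the problem: the maximal-entanglement constraint couples the state $\ket{\Psi}$ and all the $V_j$ jointly (e.g.\ for a product input each $V_j$ need only send one particular vector to a maximally entangled state, and the dephasing property of the channel plays no role there; one must instead argue via LU-equivalence of the input). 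Even in the control-isometry case of \cite{kraus2009lme} the reduction requires showing that nonvanishing coefficients force $\braket{\phi_0|\phi_1}=0$, and the extension to arbitrary entangling isometries in \cite{Styliaris2025matrixproduct} is additional work. As written, your argument proves only the easy inclusion and defers the hard one to the very references the proposition cites.
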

For an $N$-qubit system, we can always vectorize any unitary $U$ to obtain
\begin{align}
    \ket{U_{AB}} = \sum_{i,j} U_{ij}\ket{i}_A\ket{j}_B\,.
\end{align}
This is an unnormalized maximally entangled state, i.e., its Choi-Jamio\l kowski  state. We can always write this in terms of the local isometric compression
\begin{align}
    \ket{U_{AB}} = \bigotimes_j V_j\ket{\Psi_U}\,,
\end{align}
where $V_j$ are local isometries and the state $\ket{\Psi_U}$ is (unnormalized) LME state. Clearly, every unitary $U$ admits such a representation in the trivial sense $V_j = \openone_{d^2}$ (when $V_j$'s are local unitaries). A more interesting case is when a non-trivial compression is possible, i.e., when $\ket{\Phi_U}$ is strictly lower-dimensional than $\ket{U_{AB}}$ we started with. A subclass of unitaries that admits a genuine non-trivial isometric compression is the {phase unitaries}.

\begin{definition}[Phase unitaries \cite{Styliaris2025matrixproduct}]
\label{eq: phase-unitary-general}
    A phase unitary is a diagonal unitary given by
    \begin{align}
        U_{\theta} = \sum_{i_1=1}^d...\sum_{i_N=1}^de^{i\theta_{i_1...i_N}}\ketbra{i_1...i_N}{i_1...i_N}\,.
    \end{align}
    That is, its Choi-Jamio\l kowski state is obtained from the unnormalized phase-LME state $\ket{\Psi_\theta}$
    \begin{align}
        \ket{U_\theta} &= \bigotimes_j V_j\ket{\Psi_\theta}\,,\notag\\
        \ket{\Psi_\theta} &=\sum_{i_1=1}^d...\sum_{i_N=1}^de^{i\theta_{i_1...i_N}}\ket{i_1...i_N} \label{eq: phase-LME}
    \end{align}
    where $V_j:\ket{i_j}\mapsto \ket{i_ji_j}$. 
\end{definition}

Up to this point, we do not have any tensor-network assumptions. We are interested 
in a subclass of phase unitaries that are also MPUs with bond dimension $D$, which was first studied in \cite{Styliaris2025matrixproduct}. 

\begin{definition}[Phase MPS \cite{Styliaris2025matrixproduct}]
    \label{def: phase-MPU}
    A {{phase MPS}} with bond dimension $D$ is a phase-LME state $\ket{\Psi_\theta}$ that is also an MPS with bond dimension $D$: that is,
    \begin{subequations}
    \label{eq: phase-MPS}
    \begin{align}
        \ket{\Psi_{\theta}} = \sum_{i}\Tr(BA^{i_1}_1...A_N^{i_N})\ket{i_1...i_N}\,,
    \end{align}
    where $B, A^{i}_k\in \M_{D}(\C)$ for all $k$ and $i$ such that
    \begin{align}
        \quad \Tr(BA_1^{i_1}...A_N^{i_N})= e^{i\theta_{i_1...i_N}}\,.
    \end{align}
    \end{subequations}
\end{definition}
\noindent Phase MPS provides us immediately with a family of phase MPUs using local copy isometry $V_j:\ket{i_j}\mapsto \ket{i_ji_j}$.
\begin{definition}[Phase MPU \cite{Styliaris2025matrixproduct}]
A \emph{{phase MPU}} is a diagonal phase unitary such that
\begin{align}
    U_\theta = \sum_{i}\Tr(BA^{i_1i_1}_1...A_N^{i_Ni_N})\ketbra{i_1...i_N}{i_1...i_N}\,.
\end{align}
with $\Tr(BA_1^{i_1i_1}...A_N^{i_Ni_N})= e^{i\theta_{i_1...i_N}}$. That is, $U_\theta$ is the unitary whose Choi-Jamio\l kowski state is $\ket{U_\theta} = \bigotimes_j V_j\ket{\Psi_\theta}$, where $V_j:\ket{i_j}\mapsto \ket{i_ji_j}$ is local isometry and $\ket{\Psi_\theta}$ is the phase MPS \eqref{eq: phase-MPS}, by identifying $A^{ii}_k$ with $A^{i}_k$ from the phase MPS.   
\end{definition}

\begin{example}[Weighted finite automata \cite{Styliaris2025matrixproduct}]
    \label{example: multi-control Z-gate}
    Consider {a phase MPS} \eqref{eq: phase-MPS} with non-uniform bulk tensors 
    \begin{align}
        \cbraket{\alpha|A^{i}_k|\beta} = \delta_{\beta,f^{(k)}_{\alpha,i}}e^{i\theta_{\alpha,i}^{(k)}}\,.
    \end{align}
    where $f^{(k)}_{\alpha,i}\in \{0,1,2,...,D_{k-1}-1\}$ and $\theta_{\alpha,i}^{(k)}$ are arbitrary phases. This is essentially a deterministic weighted finite automaton with complex phases as weights \cite{Styliaris2025matrixproduct}, memory size fixed by $D$ and $f$ specifies how the transition to different automaton states works. This gives us a phase MPU according to Definition~\eqref{def: phase-MPU}. A concrete example is the multi-control $Z$-gate
    \begin{align}
        U = \openone^{\otimes N} - 2\ketbra{1}{1}^{\otimes N}\,,
    \end{align}
    where $D_k=2$ for all $k$, $f_{\alpha,i} = i\cdot\alpha$ and $\theta^{1}_{1,1} = \pi$ is the only non-trivial phase.
    \exampleqed
\end{example}

\subsubsection*{Phase cMPU}

We are now ready to construct our first non-trivial family of cMPUs by following the construction of phase MPUs. The first step is to take unnormalized phase MPS state in Eq.~\eqref{eq: phase-LME} and consider its continuum limit, i.e., a \textit{phase cMPS}. 
\begin{definition}[Phase cMPS]
    We say that $\ket{\Phi_\theta}$ is a {{phase cMPS}} if it is a cMPS $\ket{\psi[B,Q,L]}$ defined in Eq.~\eqref{eq: cMPS-Fock} such that the continuous matrix-product coefficients are complex phases. That is, for all $j\geq 0$ we have
    \begin{align}
        \Tr(BV_-^1L_1V_1^2...V_{j-1}^jL_jV_j^+ ) = e^{i\theta_j(x_1,...,x_j)}
        \label{eq: complex-phase-MPS}
    \end{align}
    for some choice of matrices $B,Q(x),L(x)\in M_D(\C)$ and a family of phase functions $\{\theta_j: [0,\ell]^j\to \R \}$. In general we allow $B$ to depend on $\ell$. 
\end{definition}
\noindent For now let us assume that we can find non-trivial $B,Q,L\in M_D(\C)$ such that Eq.~\eqref{eq: complex-phase-MPS} holds. Then starting from the phase cMPS expression
\begin{align}
    \ket{\Phi_\theta}
    &=  \sum_{j=0}^\infty \int D^jx\,e^{i\theta_j(x_1...x_j)}\psi^\dagger(x_1)...\psi^\dagger(x_N)\ket{\Omega}\,,
    \label{eq: phase-cMPS}
\end{align}
we formally adapt the local copy isometry $V = \bigotimes_j V_j$ to the continuum limit by considering the mapping
\begin{align}
    V:
    \prod_{i=1}^j \psi^\dagger(x_i)\ket{\Omega}\mapsto \prod_{i=1}^j \psi^\dagger(x_i) \ketbra{\Omega}{\Omega}\psi(x_i)
    \label{eq: local-isometry-continuum}
\end{align}
and apply to $V$ to $\ket{\Psi_\theta}$ in Eq.~\eqref{eq: phase-cMPS}. This provides us with a family of {phase cMPU} $U_\theta $ associated with the phase cMPS $\ket{\Phi_\theta}$ in analogy with the discrete phase MPU.

\begin{lemma}[Phase cMPU]
\label{lemma: phase cMPU}
The following diagonal phase unitary 
\begin{align}
    U_\theta
    &\coloneqq  \sum_{j=0}^\infty \int D^jx\,\Tr(BV_-^1T_1V_1^2...V_{j-1}^jT_jV_j^+ )\notag\\
    &\qquad \psi^\dagger(x_1)...\psi^\dagger(x_j)\ketbra
    {\Omega}{\Omega}\psi(x_j)...\psi(x_1)\,,
    \label{eq: phase cMPU-Fock}
\end{align}
obtained from un-vectorizing $\ket{U_\theta}$ such that the continuous matrix-product coefficients \eqref{eq: complex-phase-MPS} holds, is a cMPU of the form
\begin{align}
    U_\theta = \Tr_D\rr{B\mathcal{P}e^{\int \dd x \,Q(x)\otimes \Id[\cdot] + T(x)\otimes \Ad_x[\cdot] }}\rr{\ketbra{\Omega}{\Omega}}
    \label{eq: phase cMPU-exp}
\end{align}
where $Q,T$ satisfy Eq.~\eqref{eq: complex-phase-MPS} for all $j \geq 0$. The matrix $T(x)$ is identified with $L(x)$ of the corresponding phase cMPS.
\end{lemma}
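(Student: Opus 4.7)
The plan is to verify that the two expressions \eqref{eq: phase cMPU-Fock} and \eqref{eq: phase cMPU-exp} agree by expanding the path-ordered exponential in \eqref{eq: phase cMPU-exp} as a Dyson series, and then to derive unitarity from the resulting diagonal form. The key observation is that on the supermap side one has $[\Id,\Ad_x]=0$ and $[\Ad_x,\Ad_y]=0$ for all $x,y$ (the latter following from $[\psi^\dagger(x),\psi^\dagger(y)]=[\psi(x),\psi(y)]=0$ for bosons, as in \eqref{eq: ladder-maps-commute}), so the path-ordering in \eqref{eq: phase cMPU-exp} is non-trivial only through the auxiliary matrices $Q(x),T(x)$.

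The first step is to work in the ``interaction picture'' with respect to $Q(x)\otimes\Id$ and iterate Duhamel's formula; this is exactly the same expansion used for the cMPS in \eqref{eq: cMPS-Fock}, but applied at the level of the superoperator acting on $\fock$. The result is
\begin{align*}
\mathcal{P}e^{\int\dd x\,(Q\otimes\Id+T\otimes\Ad_x)}
&= \sum_{j=0}^\infty\int D^jx\,(V_-^1\otimes\Id)(T_1\otimes\Ad_{x_1})(V_1^2\otimes\Id)\cdots(T_j\otimes\Ad_{x_j})(V_j^+\otimes\Id).
\end{align*}
Tracing against $B$ over the bond space and applying the result to $\ketbra{\Omega}{\Omega}$, the auxiliary piece collapses to the scalar $\Tr(BV_-^1 T_1 V_1^2\cdots T_j V_j^+)$, while the supermap piece evaluates to $\Ad_{x_j}\circ\cdots\circ\Ad_{x_1}[\ketbra{\Omega}{\Omega}] = \psi^\dagger(x_1)\cdots\psi^\dagger(x_j)\ketbra{\Omega}{\Omega}\psi(x_j)\cdots\psi(x_1)$, where the order of compositions is immaterial because the $\Ad_x$'s commute. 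Summing over $j$ reproduces \eqref{eq: phase cMPU-Fock} once the $T_i$ of the cMPO are identified with the $L_i$ of the corresponding phase cMPS, using the phase condition \eqref{eq: complex-phase-MPS}.

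For unitarity, the explicit form \eqref{eq: phase cMPU-Fock} shows that $U_\theta$ acts diagonally on each $j$-particle Fock basis vector $\psi^\dagger(x_1)\cdots\psi^\dagger(x_j)\ket{\Omega}$ by multiplication by $e^{i\theta_j(x_1,\ldots,x_j)}$; since these coefficients have unit modulus and such basis vectors span $\fock$, the operator $U_\theta$ is unitary. Equivalently, one can invoke Lemma~\ref{lemma: unitarity-v1}: using the product rules \eqref{eq: product-of-MPO-tensors} to assemble the cMPO tensors of $U_\theta^\dagger U_\theta$ and $U_\theta U_\theta^\dagger$, the fact that $L=R=0$ guarantees $L_\pm=R_\pm=0$, so only $K^A$-type contributions survive, and the phase condition \eqref{eq: complex-phase-MPS} forces the remaining coefficients to equal $1$, matching the cMPU representation of $\openone$ in \eqref{eq: identity-cMPU}.

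The main technical subtlety I expect is justifying the interchange of the infinite sum and the integrals in the Dyson expansion at the level of unbounded operators on $\fock$. This requires mild regularity assumptions on $Q(x),T(x)$ (e.g.\ boundedness and continuity on the compact interval $I$) so that each $j$-particle sector converges in norm, paralleling the well-definedness assumptions made for cMPS in \cite{haegeman2013cmps}; with those in place, the operator $U_\theta$ is defined on a dense domain and the diagonal phase action extends to a bounded unitary on $\fock$.
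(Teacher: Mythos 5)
Your proposal is correct and follows essentially the same route as the paper, whose entire proof is the one-line observation that a diagonal unitary can only involve the $\Id$ and $\Ad_x$ supermaps; your Dyson-series expansion, the collapse of the commuting $\Ad_{x_i}$ compositions onto the diagonal Fock form with coefficients $\Tr(BV_-^1T_1\cdots T_jV_j^+)$, and the unitarity check via the unimodular phase condition \eqref{eq: complex-phase-MPS} simply make that observation explicit. The added remarks on regularity and on verifying unitarity via Lemma~\ref{lemma: unitarity-v1} are consistent with the paper but not needed for its argument.
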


\begin{proof}
    The proof follows directly from the fact that diagonal unitaries can only contain the $\Ad_x$ supermaps. 
\end{proof}

The existence of phase cMPUs in Lemma~\ref{lemma: phase cMPU} relies on the existence of phase cMPS, which in turn depends on whether it is possible to find matrices $B,Q,L\in M_D(\C)$ satisfying Eq.~\eqref{eq: complex-phase-MPS}. Here we give an explicit construction of a family of phase cMPU for every $D\geq 1$ {mentioned in the main text}. 
\begin{proposition}[permutation-phase cMPU]
    \label{proposition: generalized-permutation-cMPU}
    The following family of matrices
    \begin{align*}
        Q(x) &= i\mathrm{diag}\rr{q_1(x),...,q_D(x)}\,,\\
        T(x) &\in U(1)^D \rtimes S_D\,,\\
        B &= V_{x_+}^{x_-}\cketbra{k}{+}\qquad k = 0,1,2...,D-1\,,
    \end{align*}
    give rise to a phase cMPU. Here $U(1)^D\equiv \bigoplus_{j=1}^D U(1)$ is a direct sum of phases, $S_D$ is the symmetric group of $D$ elements, $V_{x_+}^{x_-}$ is defined in Eq.~\eqref{eq: cMPS-coefficients}, $q_j,t_j$ are real-valued functions, and $T(x)$ is a generalized permutation matrix $T(x) = e^{i\mathrm{diag}\rr{t_1(x),...,t_D(x)}}P$ with $P\in S_D$. Here we use rounded braket notation $\cketbra{k}{+}$ for vectors in the bond space and $\cket{+} = (1,1,1...,1)$. 
\end{proposition}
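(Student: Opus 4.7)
The plan is to verify the defining condition of Lemma~\ref{lemma: phase cMPU}, namely that the matrix-product coefficients $\Tr(B V_-^1 T_1 V_1^2 \cdots V_{j-1}^j T_j V_j^+)$ are pure complex phases for every $j\geq 0$ and every $x_1,\ldots,x_j\in I$. Once this holds, Lemma~\ref{lemma: phase cMPU} immediately promotes the corresponding diagonal unitary into a phase cMPU of the form \eqref{eq: phase cMPU-exp}.

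The first step is to identify the algebraic structure of each factor in the product. Since $Q(x) = i\,\mathrm{diag}(q_1(x),\ldots,q_D(x))$ is diagonal, it commutes with itself at different points, so path-ordering becomes trivial and $V_a^b = \mathrm{diag}(e^{i\int_a^b q_1},\ldots,e^{i\int_a^b q_D})$, a diagonal unitary with pure-phase entries. Every such diagonal unitary is trivially an element of $U(1)^D\rtimes S_D$ (with identity permutation component). Together with the hypothesis $T(x)\in U(1)^D\rtimes S_D$, this means every factor in the product string $V_-^1 T_1 V_1^2 \cdots T_j V_j^+$ lies in the monomial group $U(1)^D\rtimes S_D$.

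The second step is to invoke closure of $U(1)^D\rtimes S_D$ under multiplication. Writing $T_k = D_k P_k$ with $D_k$ a diagonal phase matrix and $P_k\in S_D$, one can iteratively push all permutations to the right via the identity $P D = (P D P^{-1}) P$, noting that $P D P^{-1}$ remains a diagonal phase matrix (its entries are merely relabelled). Collecting factors, the whole product $M := V_-^1 T_1 V_1^2 \cdots T_j V_j^+$ takes the form $M = \mathcal{D}P$ for some diagonal phase matrix $\mathcal{D}$ and some permutation $P\in S_D$; in particular every nonzero entry of $M$ is a complex number of unit modulus.

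The final step is to evaluate the trace using the specific boundary $B = V_{x_+}^{x_-}\cketbra{k}{+}$. Cyclicity gives $\Tr(BM) = \cbra{+} M V_{x_+}^{x_-} \cket{k}$. Since $V_{x_+}^{x_-}$ is diagonal, $V_{x_+}^{x_-}\cket{k} = (V_{x_+}^{x_-})_{kk}\cket{k}$ is a pure phase times $\cket{k}$, while $\cbra{+}M\cket{k} = \sum_a M_{ak}$ sums the $k$-th column of $M$ and therefore collapses to the single nonzero entry of that column, which is itself a phase. Hence $\Tr(BM)$ is the product of two complex phases, establishing Eq.~\eqref{eq: complex-phase-MPS}. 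The only mild obstacle is the bookkeeping when intermediate permutations shuffle the diagonals through the product, but this is purely algebraic inside the group $U(1)^D\rtimes S_D$ and involves no analytic delicacy; conceptually the argument relies on two ingredients — the monomial property of each factor, and the rank-one structure of $B$ which collapses the trace to a single entry, since a more generic boundary would produce sums of phases that generally fail to remain a phase.
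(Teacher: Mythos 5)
Your proof is correct and follows essentially the same route as the paper's: closure of the monomial group $U(1)^D\rtimes S_D$ under the matrix product, combined with the rank-one boundary $B\propto V_{x_+}^{x_-}\cketbra{k}{+}$, which collapses the trace to a single unit-modulus entry. You merely spell out the trace evaluation and the $M=\mathcal{D}P$ decomposition more explicitly than the paper, which states these facts without computation.
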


\begin{proof}
    Since $V_x^y = e^{i \int_x^y\dd z\, Q(z)} \in U(1)^D \rtimes S_D $, it follows that product 
    \begin{align}
        V_-^1T_1V_1^2...V_{j-1}^jT_jV_j^+ \in U(1)^D\rtimes S_D\,,
    \end{align}
    i.e.,  the matrix product is closed for arbitrary $j\geq 0$. The crucial part is the choice of boundary: in order to guarantee that we extract the resulting phase for all $j$ without fail, we need $B\propto \cketbra{k}{+}$ for $k\in \{0,1,2,...,D-1\}$ to ensure that exactly one phase factor is picked up by the trace over the auxiliary space, thus ensuring unitarity. It can also be checked that this choice indeed fulfills unitarity condition in Lemma~\ref{lemma: unitarity-v1}.
\end{proof}

To have a more concrete description of these unitaries, we first look at two concrete examples for $D=1$ and $D=2$.
\begin{example}
    \label{example: phase-D=1}
    Set $D=1$. Then $B,Q,T$ are scalars and
    \begin{align}
        U_\theta &= \sum_{j=0}^\infty \int D^jx\, B e^{\int_I\dd x\, Q(x)}T(x_1)...T(x_j)\Ad^j_x(\Omega)
    \end{align}
    which is a phase unitary if and only if
    \begin{align}
        Q(x) = i q(x)\,,\qquad T(x) = e^{ir(x)}
    \end{align}
    where $q,r$ are real-valued. Since $q(x)$ only introduces global phase, we can set $B = e^{-i\int_I \dd x\, q(x)}$ to absorb the global phase and since everything commutes, we can write
    \begin{align}
        U_\theta &= \sum_{j=0}^\infty \int \frac{\dd^j x}{j!}\, e^{i\sum_{i=1}^j r(x_j)}\Ad^j_x(\Omega)\,.
        \label{eq: D=1 phase-cMPU}
    \end{align}
    We can see that this is a phase unitary by looking at its action on each $N$-particle sector.
    
    While Eq.~\eqref{eq: D=1 phase-cMPU} is sufficiently concrete, in quantum field theory it is desirable to have a more ``native'' expression in terms of the field operators $\psi(x)$. This is indeed possible, and we get
    \begin{align}
        U_\theta &= \exp \rr{i\int_I\dd x \, r(x)n(x)}
    \end{align}
    where $n(x)=\psi^\dagger(x)\psi(x)$ is the number density operator, which is manifestly unitary. As an aside, the identity operator $U = \openone$ can thus be seen as a trivial phase cMPU by setting $Q(x) = 0$ and $T(x) = 1$.
    \exampleqed 
\end{example}

To facilitate the subsequent constructions, we first prove the following lemma.
\begin{lemma}
    \label{lemma: left-continuity-Heaviside}
    Let $n(x)=\psi^\dagger(x)\psi(x)$ and define 
    \begin{align}
        \Pi(x) \coloneqq \int_x^{\ell/2}\dd z\,n(z)\,.
    \end{align}
    Then
    \begin{align}
         e^{i\theta \Pi(x)}\psi^\dagger(x)e^{-i\theta\Pi(x)} &= e^{i\theta \Theta(0)}\psi^\dagger(x)
    \end{align}
    where $\Theta(0)$ is an extended value of the Heaviside function $\Theta$ at the origin. In particular,  if we extend $\Theta$ to be left-continuous, i.e., $\Theta(0)=0$, then $[\Pi(x),\psi^\dagger(x)]=0$.
\end{lemma}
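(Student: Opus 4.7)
The plan is to reduce the conjugation identity to a commutator computation via the Hadamard/BCH lemma, and then show that commutator is proportional to $\psi^\dagger(x)$ with a proportionality constant equal to $\Theta(0)$.

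First, I would compute the commutator $[\Pi(x),\psi^\dagger(x)]$ directly from the canonical commutation relations. Using $[\psi^\dagger(z)\psi(z),\psi^\dagger(x)] = \psi^\dagger(z)\delta(z-x)$, one gets
\begin{align*}
    [\Pi(x),\psi^\dagger(x)] = \int_x^{\ell/2}\dd z\,\psi^\dagger(z)\delta(z-x) = \Theta(0)\,\psi^\dagger(x),
\end{align*}
where the factor $\Theta(0)$ arises precisely because the delta is supported at the lower endpoint of the integration range and therefore its value depends on the chosen extension of the Heaviside step function at the origin.

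Next, I would invoke Hadamard's lemma. Since $[\Pi(x),\psi^\dagger(x)] = \Theta(0)\psi^\dagger(x)$ is itself a multiple of $\psi^\dagger(x)$, it commutes with $\psi^\dagger(x)$, so all nested commutators beyond the first truncate in a geometric pattern. Concretely, differentiating $F(\theta) \coloneqq e^{i\theta\Pi(x)}\psi^\dagger(x)e^{-i\theta\Pi(x)}$ in $\theta$ gives $F'(\theta) = i\,e^{i\theta\Pi(x)}[\Pi(x),\psi^\dagger(x)]e^{-i\theta\Pi(x)} = i\Theta(0)F(\theta)$, and integrating with $F(0)=\psi^\dagger(x)$ yields $F(\theta) = e^{i\theta\Theta(0)}\psi^\dagger(x)$, as claimed. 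Adopting the left-continuous convention $\Theta(0)=0$ then gives $[\Pi(x),\psi^\dagger(x)]=0$ and hence the trivial conjugation.

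The main subtlety, rather than a genuine obstacle, is handling the distributional product $\Theta(0)$: the formal manipulations above are meaningful only once a regularisation of the Heaviside step at the origin is fixed, and the content of the lemma is really to make that regularisation convention explicit. A cleaner justification I would include is to smear the argument: replace $\Pi(x)$ by $\Pi_\eta(x) \coloneqq \int\dd z\,\chi_\eta(z-x)n(z)$ with a smooth cutoff $\chi_\eta$ whose $\eta\to 0$ limit recovers a particular choice of $\Theta$ at $0$, carry out the commutator unambiguously, and then take $\eta\to 0$; the left-continuous choice corresponds to $\chi_\eta$ supported strictly in $(x,\ell/2]$, in which case the commutator vanishes identically for all $\eta$.
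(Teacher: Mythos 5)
Your proposal is correct and follows essentially the same route as the paper: compute $[\Pi(x),\psi^\dagger(x)]$ from the canonical commutation relations, observe that the delta function sits at the endpoint of the integration range so the coefficient is $\Theta(0)$, and exponentiate via BCH/Hadamard. The only cosmetic difference is that the paper first evaluates $[\Pi(x),\psi^\dagger(y)]=\Theta(y-x)\psi^\dagger(y)$ at separated points and then takes the coincidence limit $y\to x$, whereas you work at coincidence directly and add an explicit ODE justification and a smearing regularisation; both are sound.
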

\begin{proof}
    We have
    \begin{align}
         [\Pi(x),\psi^\dagger(y)] 
         &= \int_x^{\ell/2}\dd z\,[n(z),\psi^\dagger(y)] \notag\\
         &= \int_x^{\ell/2}\dd z\,\delta(y-z)\psi^\dagger(y)\notag\\
         &= \Theta(\ell/2-y)\Theta(y-x)\psi^\dagger(y)\notag\\
         &= \Theta(y-x)\psi^\dagger(y)\,.
    \end{align}
    where in the last equality we use the fact that $y\leq \ell/2$. The standard Baker-Campbell-Hausdorff (BCH) formula then gives in the coincidence limit $y\to x$
    \begin{align}
        e^{i\theta \Pi(x)}\psi^\dagger(x)e^{-i\theta\Pi(x)} &= e^{i\theta \Theta(0)}\psi^\dagger(x)\,.
    \end{align}
    We pick the convention that $\Theta(0) = 0$, i.e., that $\Theta$ is left-continuous, so that $[\Pi(x),\psi^\dagger(x)]=0$ for all $x\in [-\ell/2,\ell/2]$.
\end{proof}

\begin{example}[Parity-controlled phase unitary]
    \label{example: phase-D=2}
    Consider a phase unitary $U_\theta$ with $D=2$, specified by bulk-uniform matrices $Q,T$ and $\ell$-dependent boundary matrix $B$ 
    \begin{align}
        B = V_{x_+}^{x_-}\cketbra{0}{+}\,,\quad  Q = \frac{\omega}{2}Z\,,\quad T = X\,,
    \end{align}
    where $X,Z$ are Pauli matrices and $\omega >0$. By direct computation, one can check that the phase functions for any $N\geq 0$ reads
    \begin{align}
        \theta_N(x_1,...,x_N) = \begin{cases}
            1 &\qquad N = 0\\
            -\sum_{j=1}^N (-1)^{N-j} x_j &\qquad N\geq 1
        \end{cases}
        \label{eq: D=2-phases}
    \end{align}
    The dependence of $B$ on system size $\ell$ is necessary to make the phases independent of $\ell$, which is relevant if we are interested in the thermodynamic limit $\ell\to\infty$. The choice is not unique: we could get the same phase by using a different set of matrices:
    \begin{align*}
        B = \cketbra{0}{+}\,,\quad  Q = 0\,,\quad T(x) = e^{ixQ}Xe^{-ixQ}\,,
    \end{align*}
    which trades bulk uniformity with constant boundary.

    As in Example~\ref{example: phase-D=1}, we could have simply described the unitary concretely through its actions on the $N$-particle states. However, it is desirable to find an explicit expression for $U_\theta$ in terms of only the field operators. We claim that the corresponding field unitary is a string operator of the form
    \begin{align}
        U_\theta = e^{-i\omega K}\,,\quad K =  \int_I\dd x\,x (-1)^{\Pi(x)}n(x)
    \end{align}
    where $K$ is the Hermitian generator and $\Pi(x)$ is as defined in Lemma~\ref{lemma: left-continuity-Heaviside}. It is instructive to check its action on the first few $N$-particle Fock states and it generalizes straightforwardly for arbitrary Fock states. 
    
    For $N=0$ we have $U_\theta\ket{\Omega} = \ket{\Omega}$, and for $N=1$ 
    \begin{align}
        n(x)\psi^\dagger(y)\ket{\Omega} = \delta(x-y)\psi^\dagger(y)\ket{\Omega}\,.
    \end{align}
    Thus given a one-particle state \mbox{$\ket{1_f} = \int\dd y\,f(y)\psi^\dagger(y)\ket{\Omega}$} where $f\in L^2(I)$, we have
    \begin{align}
        U_\theta\ket{1_f} 
        &= \int_I \dd y f(y)  e^{-i\omega \int_I\dd x\,x\cdot (-1)^{\Pi(x)}n(x)}\psi^\dagger(y)\ket{\Omega}\notag\\
        &=  \int_I \dd y f(y)  e^{-i\omega y (-1)^{\Pi(y)}}\psi^\dagger(y)\ket{\Omega} \notag \\
        &=  \int_I \dd y f(y)  e^{-i\omega y}\psi^\dagger(y)\ket{\Omega}\,.
    \end{align}
    In the last equality we used Lemma~\ref{lemma: left-continuity-Heaviside}.

    The first non-trivial check is $N=2$, where we need an alternating phase
    \begin{align*}
        e^{i\theta_2(x_1,x_2)} = e^{-i\omega(x_1-x_2)}\,.
    \end{align*}
    For two-particle Fock state with symmetric smearing $f(x_1,x_2) = f(x_2,x_1)$, we write
    \begin{align}
        \ket{2_f} &\propto \int \dd^2x\,f(x_1,x_2)\psi^\dagger(x_1)\psi^\dagger(x_2)\ket{\Omega} \,.
    \end{align}
    and applying $K$ we get
    \begin{align}
        &K\psi^\dagger(x_1)\psi^\dagger(x_2)\ket{\Omega} \notag\\ 
        &=  \int_I\dd x\,x (-1)^{\Pi(x)}n(x)\psi^\dagger(x_1)\psi^\dagger(x_2)\ket{\Omega} \notag\\
        &= \sum_{j=1}^2 \int_I\dd x\,x (-1)^{\Pi(x)}\delta(x-x_j)\psi^\dagger(x_1)\psi^\dagger(x_2)\ket{\Omega} \notag\\
        &= \sum_{j=1}^2 \,x_j (-1)^{\Pi(x_j)}\psi^\dagger(x_1)\psi^\dagger(x_2)\ket{\Omega}\,.
    \end{align}
    Using Lemma~\ref{lemma: left-continuity-Heaviside},
    \begin{align*}
        x_1(-1)^{\Pi(x_1)}\psi^\dagger(x_1)\psi^\dagger(x_2) &= x_1(-1)^{\Theta(x_2-x_1)}\psi^\dagger(x_1)\psi^\dagger(x_2) \\
        x_2(-1)^{\Pi(x_2)}\psi^\dagger(x_1)\psi^\dagger(x_2) &= x_2(-1)^{\Theta(x_1-x_2)}\psi^\dagger(x_1)\psi^\dagger(x_2)
    \end{align*}
    so that since $x_1<x_2$ we have
    \begin{align}
        U_\theta \ket{2_f} 
        &= \int_I\!\! D^2x\,e^{-i\omega (x_2-x_1)} f(x_1,x_2)  \psi^\dagger(x_1)\psi^\dagger(x_2)\ket{\Omega}\,,
    \end{align}
    thus producing the alternating phase as required. 
    
    This procedure generalizes to higher-particle sectors: more generally, for $N$-particle sector we have 
    \begin{align}
        K\prod_{j=1}^N\psi^\dagger(x_j)\ket{\Omega} 
        &= \sum_{j=1}^N \,x_j (-1)^{\Pi(x_j)}\prod_{l=1}^N\psi^\dagger(x_l)\ket{\Omega}
    \end{align}
    and using Lemma~\ref{lemma: left-continuity-Heaviside} with $\theta = \pi$, we can write this as
    \begin{align}
        &K\prod_{j=1}^N\psi^\dagger(x_j)\ket{\Omega} \notag\\
        &= \sum_{j=1}^N \,x_j (-1)^{\sum_{l>j}\Theta(x_l-x_j)}\prod_{l=1}^N\psi^\dagger(x_l)\ket{\Omega}\notag\\
        &= \sum_{j=1}^N \,x_j (-1)^{N-j}\prod_{l=1}^N\psi^\dagger(x_l)\ket{\Omega}\,,
    \end{align}
    which gives the required alternating phase \eqref{eq: D=2-phases}.
    \exampleqed 
     
\end{example}
\noindent The permutation-phase cMPU has the generic form
\begin{align}
    U_\theta = \exp\rr{i\int_I\dd x\,\mathcal{F}\left[\Pi(x)\right]n(x)}
\end{align}
for some non-trivial choice of functional $\mathcal{F}$. In Example~\ref{example: phase-D=2} we had $\mathcal{F}[\Pi(x)] = xe^{i \pi \Pi(x)}$. Thus the tensor-network structure manifests through the string operator $\Pi(x)$ in the generator of the unitary.

Although Proposition~\ref{proposition: generalized-permutation-cMPU} provides us with a large supply of phase cMPUs, not all phase cMPUs are permutation-phase cMPUs.
Using the path-ordered ansatz \eqref{eq: phase cMPU-exp}, it is possible to construct a different family of phase cMPUs.
\begin{proposition}[Number-controlled phase cMPU]
    \label{proposition: number-controlled-phase-cMPU}
    Let $J^-_D$ be the $D$-dimensional representation of the lowering operator of $\mathfrak{su}(2)$ with $D\geq 2$. Given the phase cMPU ansatz \eqref{eq: phase cMPU-exp}, the following family of matrices
    \begin{align*}
        B &= 1\oplus (-1+e^{i\theta})(J^-)^{D-1}\,,\\
        Q &= 0\,,\\
        T &= 1\oplus N_D\,, (N_D)_{ab} = \delta_{a,a+1} 
    \end{align*}
    produces a bulk-uniform phase cMPU with bond dimension $D+1$ up to some redundant gauge transformation. The resulting unitary is
    \begin{align}
        U_\theta = \openone + (e^{i\theta}-1)\delta_{D-2,j}\int D^j x \,\Ad_x^j(\Omega)
        \label{eq: number-controlled-phase-cMPU}
    \end{align}
    that adds a non-trivial phase on the $(D-2)$-particle sector. 
\end{proposition}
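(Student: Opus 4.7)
The plan is to compute the cMPO \eqref{eq: cMPO-ansatz} with the specified tensors directly. The first observation I would make is that, since $Q=0$ and $T$ is bulk-uniform, and since the supermaps $\Ad_x$ commute at distinct points (as in \eqref{eq: ladder-maps-commute} lifted pointwise to $\ketbra{\Omega}{\Omega}$), the path ordering in $\mathcal{P}e^{\int dx\,T\otimes \Ad_x}$ becomes redundant. Expanding the Dyson series then gives
\begin{align*}
\mathcal{P}e^{\int dx\, T\otimes\Ad_x}[\Omega]=\sum_{j=0}^\infty T^j\otimes \int D^jx\,\Ad_{x_1}\cdots\Ad_{x_j}(\Omega).
\end{align*}
Since $\int D^j x\,\Ad_{x_1}\cdots\Ad_{x_j}(\Omega)=P_j$ is precisely the projector onto the $j$-particle sector of $\fock$, tracing over the auxiliary space reduces the cMPU to the diagonal form $U_\theta=\sum_{j\ge 0}\Tr(BT^j)\,P_j$.

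Next I would exploit the block-diagonal structure. Writing $B=1\oplus B'$ with $B'=(-1+e^{i\theta})(J^-)^{D-1}$ and $T=1\oplus N_D$, one has $\Tr(BT^j)=1+\Tr(B'N_D^j)$ for every $j$. The constant $1$ contributes $\sum_j P_j=\openone$, which already accounts for the leading term of the claimed expression. The problem therefore reduces to the trace on the $D\times D$ block. Here two structural facts take care of the computation: nilpotency of $N_D$ truncates the sum at $j\le D-1$, and $(J^-)^{D-1}$ in the $D$-dimensional irrep of $\mathfrak{su}(2)$ is (up to a scalar) the rank-one ``corner'' matrix sending the highest-weight vector to the lowest-weight vector. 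Consequently $(J^-)^{D-1}N_D^j$ has a single potentially nonzero entry, and the trace $\Tr((J^-)^{D-1}N_D^j)$ is nonvanishing for exactly one value $j=j^\ast$ of $j$, namely the one for which the off-diagonal support of $N_D^j$ aligns with that of $(J^-)^{D-1}$ to produce a diagonal matrix element. Choosing the overall normalization of $J^-$ (absorbable into $B$) to make that trace equal to $1$ yields $\Tr(BT^j)=1$ for $j\neq j^\ast$ and $\Tr(BT^{j^\ast})=e^{i\theta}$.

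Assembling the two contributions, one obtains $U_\theta=\openone+(e^{i\theta}-1)P_{j^\ast}$, which is the claimed expression in \eqref{eq: number-controlled-phase-cMPU}. Unitarity is then immediate, since this operator has the form $(\openone-P_{j^\ast})+e^{i\theta}P_{j^\ast}$ with $P_{j^\ast}$ an orthogonal projector; alternatively, it can be verified directly by checking the conditions of Lemma~\ref{lemma: unitarity-v1} against the two cMPOs $O^\dagger O$ and $OO^\dagger$. The qualifier ``up to some redundant gauge transformation'' simply reflects that the transformations \eqref{eq: gauge-transformation}, restricted to block-preserving $g\in GL_{D+1}(\C)$, relate equivalent presentations of the same cMPU (e.g., different normalizations of $J^-$ or rescalings of the blocks of $B$). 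The only real bookkeeping step is the trace computation in the $D\times D$ block; once the ``corner'' alignment between $(J^-)^{D-1}$ and the appropriate power of $N_D$ is identified, the rest of the proof is routine.
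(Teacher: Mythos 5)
Your argument is the same direct computation the paper itself gives (its proof is essentially one sentence: the nontrivial phase arises because $N_D^{D-1}=(J^+)^{D-1}$, so only the coefficient with $D-1$ factors of $T$ is affected), and your scaffolding---commuting $\Ad_x$'s so path ordering is redundant, Dyson expansion into the number-sector projectors $P_j$, block decomposition of $\Tr(BT^j)$, nilpotency of $N_D$ plus the ``corner'' alignment with $(J^-)^{D-1}$---is all correct. The one genuine gap is that you stop at an unnamed $j^\ast$ and assert that $\openone+(e^{i\theta}-1)P_{j^\ast}$ ``is the claimed expression'' without ever verifying $j^\ast=D-2$. Carrying your own alignment argument to the end gives $j^\ast=D-1$: the corner matrix $(J^-)^{D-1}\propto\ketbra{\mathrm{low}}{\mathrm{high}}$ meets $N_D^{j}$ on the diagonal only when $N_D^{j}$ maps the lowest to the highest weight vector, i.e.\ $j=D-1$. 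This agrees with the paper's proof sketch and with Example~\ref{example: projector-phases} (where $D=2$ and the phase sits on the one-particle sector), but it contradicts the $\delta_{D-2,j}$ in the statement; so either flag the statement's index as an off-by-one or your final identification is wrong as written. Your remark that the normalization of $(J^-)^{D-1}$ must be absorbed into $B$ is also substantive---its corner entry is $(D-1)!$ rather than $1$ for $D>2$---and is only implicitly covered by the paper's ``up to some redundant gauge transformation'' qualifier.
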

\begin{proof}
    This follows by direct computation: the key is to observe that the non-trivial phase arises because $N_D^{D-1} = (J^+)^{D-1}$ so it picks up a phase only when the matrix-product coefficient involves $D-1$ products of $T$'s. It is also possible to verify unitarity using Lemma~\ref{lemma: unitarity-v1}. 
\end{proof}
\noindent The number-controlled phase cMPU in Proposition~\ref{proposition: number-controlled-phase-cMPU} can be generalized to arbitrary number of phases on different particle number sectors via direct sums: for example, we can have
\begin{align*}
    B &= 1\oplus (-1+e^{i\theta_1})(J^-)^{D-1} \oplus (-1+e^{i\theta_2})(J^-)^{D'-1}\,,\\
    Q &= 0\,,\\
    T &= 1\oplus N_D\oplus N_{D'}\,, (N_D)_{ab} = \delta_{a,a+1} 
\end{align*}
so long as $D\neq D'\geq 2$, giving a phase cMPU with bond dimension $D+D'+1$. 

\begin{example}
    \label{example: projector-phases}
    Consider $D=3$ matrices
    \begin{align*}
        B = 1\oplus (-1+e^{i\theta})\sigma^-\,,\quad Q=0\,,\quad T = 1\oplus \sigma^+\,.
    \end{align*}
    Then the resulting operator $U_\theta$ is a diagonal cMPU 
    \begin{align}
        U_\theta = \openone + (e^{i\theta}-1)\int_I \dd x\,\psi^\dagger(x)\ketbra{\Omega}{\Omega}\psi(x)
        \label{eq: projector-phases}
    \end{align}
    that adds a nontrivial phase on 1-particle Fock states. 
    \exampleqed 
\end{example}
Observe that in this example, since $U_\theta$ is constructed out of phase cMPS, which is a continuum limit of phase MPS, it is possible to read off the relevant local tensors of its discrete counterpart:
\begin{align*}
    B = 1\oplus (-1+e^{i\theta})\sigma^-\,,\quad A^0  = \openone \,,\quad A^1 \approx  \sqrt{\epsilon }1\oplus \sigma^+
\end{align*}
where we recall that cMPS is based on having $A^{0}\sim \openone + \epsilon Q$ and $A^1 \propto L$. By removing the $\sqrt{\epsilon}$ scaling required for the continuum limit, this gives us the controlled phase unitary
\begin{align}
    U_{N,\theta} = \openone^{\otimes N} + (e^{i\theta}-1)\sum_{j=1}^N \sigma^+_j\ketbra{00...0}{00...0}\sigma^-_j
\end{align}
which is manifestly the discrete analog of $U_\theta$. This is a generic feature of cMPOs as a natural continuum limit of MPO: one should be able to obtain the discrete MPO whose limit is a given cMPO by reading off the bond matrices of the cMPO.

\subsection{Beyond phase cMPU}
\label{sec: cMPU-beyond-phases}

In this section we construct several families of cMPUs that are not diagonal in the particle number basis outside the phase unitary family.

\subsubsection{Displaced phase cMPU}

Using phase cMPU as a basis, we can now go beyond the diagonal cMPUs using the displacement unitaries. To do so, we first recall the following result from the discrete MPUs that is closely related to the $LU$-equivalence of $(2,2)$-LME states in Proposition~\ref{proposition: (2,2)-lme}.
\begin{proposition}[\cite{Styliaris2025matrixproduct}]
    Every $N$-qubit unitary $U$ admitting an $(2,2)$-LME state compression is locally unitary (LU) equivalent to a phase unitary. That is, given a phase unitary $U_\theta$ we have
    \begin{align}
        U \stackrel{LU}{=} \rr{\bigotimes_{j} V_j}U_\theta \rr{\bigotimes_{j} W_j}
    \end{align}
    where $V_j,W_j$ are local unitaries on each site $j$.
\end{proposition}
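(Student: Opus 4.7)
The plan is to descend to the Choi--Jamio{\l}kowski picture, apply the $(2,2)$-LME classification of Proposition~\ref{proposition: (2,2)-lme}, and then lift back to the operator level. First, vectorize $U$ into its Choi state $\ket{U_{AB}}$ on $N$ copies of a qubit pair. The hypothesis that $U$ admits a $(2,2)$-LME compression gives local isometries $V_j:\C^2\to\C^2\otimes\C^2$ and a $(2,2)$-LME state $\ket{\Psi_U}$ on $N$ ancilla qubits such that $\ket{U_{AB}}=\bigotimes_j V_j\ket{\Psi_U}$. I may assume the $V_j$'s are entangling, since otherwise $U$ is already a product unitary and the conclusion holds trivially with $U_\theta=\openone$.

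Next, Proposition~\ref{proposition: (2,2)-lme} provides single-qubit unitaries $u_j$ on the ancilla chain such that $\ket{\Psi_U}=\bigotimes_j u_j \ket{\Psi_\theta}$ for some phase-LME state $\ket{\Psi_\theta}$. By Definition~\ref{def: phase-MPU}, the associated phase unitary $U_\theta$ has Choi state $\ket{(U_\theta)_{AB}}=\bigotimes_j C_j\ket{\Psi_\theta}$, where $C_j:\ket{i}\mapsto\ket{ii}$ is the local copy isometry. The key structural fact underlying Proposition~\ref{proposition: (2,2)-lme} in \cite{kraus2009lme,Styliaris2025matrixproduct} is that each entangling isometry $V_j$ can in fact be put, after absorbing $u_j$, into the canonical form $V_j u_j = (V^A_j\otimes W^B_j)\,C_j$ for some single-qubit unitaries $V^A_j$ on $\mathcal{H}_{A,j}$ and $W^B_j$ on $\mathcal{H}_{B,j}$. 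Substituting,
\begin{align}
    \ket{U_{AB}} = \bigotimes_j \bigl(V^A_j\otimes W^B_j\bigr)\ket{(U_\theta)_{AB}}\,.
\end{align}

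Finally, I un-vectorize using the standard identity $(A\otimes B)\ket{M}_{AB}=\ket{AMB^T}_{AB}$ applied site-by-site, which yields
\begin{align}
    U = \Bigl(\bigotimes_j V^A_j\Bigr) U_\theta \Bigl(\bigotimes_j (W^B_j)^T\Bigr)\,,
\end{align}
i.e.\ the asserted LU-equivalence upon relabeling $(W^B_j)^T\to W_j$. The main obstacle in this plan is precisely the factorization step $V_j u_j=(V^A_j\otimes W^B_j)\,C_j$: a generic isometry from $\C^2$ into $\C^4$ cannot be decomposed in this form, so one must exploit the very restrictive structure imposed by the entangling $(2,2)$-LME property---essentially, that the two-dimensional image of $V_j$ is locally-unitary equivalent to $\Span\{\ket{00},\ket{11}\}$. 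Verifying this via the Schmidt parametrization of entangling $2$-isometries into $\C^2\otimes\C^2$ is the heart of the argument and is where the restriction $d=d'=2$ is genuinely used; the analogous statement is known to fail for general $(d,d')$-LME compressions.
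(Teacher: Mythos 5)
First, note that the paper itself gives no proof of this proposition: it is imported verbatim from \cite{Styliaris2025matrixproduct} as background for the construction of displaced phase cMPUs, so there is no in-paper argument to compare yours against. Your overall route --- pass to the Choi state $\ket{U_{AB}}$, invoke the $(2,2)$-LME classification of Proposition~\ref{proposition: (2,2)-lme} to replace $\ket{\Psi_U}$ by a phase-LME state up to local unitaries, put the compression isometries into the canonical form ``local unitaries composed with the copy isometry'', and un-vectorize with $(A\otimes B)\ket{M}=\ket{AMB^{T}}$ --- is the natural one and is consistent with how the cited works argue.

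The gap is the step you yourself flag: the factorization $V_j u_j=(V^A_j\otimes W^B_j)\,C_j$ is asserted, not proven, and it is the entire mathematical content of the proposition; everything surrounding it is bookkeeping. It is genuinely false for an arbitrary entangling isometry $\C^2\to\C^2\otimes\C^2$ (e.g.\ $V\ket{0}=\ket{00}$, $V\ket{1}=(\ket{01}+\ket{10})/\sqrt{2}$ admits no unitary $u$ making both $Vu\ket{0}$ and $Vu\ket{1}$ product vectors), so you must extract it from the global constraint that $\bigotimes_j V_j\ket{\Psi_U}$ is maximally entangled across $A|B$. The missing argument is roughly: unitarity of $U$ forces every single-pair marginal $\rho_{A_j}$ and $\rho_{B_j}$ of $\ket{U_{AB}}$ to be maximally mixed, which constrains the two-dimensional image of each $V_j$ to be spanned by a locally orthogonal product basis $\{\ket{a_0 b_0},\ket{a_1 b_1}\}$ with $\braket{a_0|a_1}=\braket{b_0|b_1}=0$, i.e.\ exactly $(V^A_j\otimes W^B_j)\Span\{\ket{00},\ket{11}\}$. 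Until that is carried out, the proof is incomplete. There is also an ordering subtlety you gloss over: the unitary $u_j$ supplied by Proposition~\ref{proposition: (2,2)-lme} and the extra basis rotation needed to align $V_j u_j\ket{i}$ with $\ket{ii}$ both act on the ancilla register carrying $\ket{\Psi_\theta}$, and a generic such rotation destroys the phase form of $\ket{\Psi_\theta}$; you need to argue it can be taken diagonal up to a permutation (which merely relabels the phases $\theta_{i_1\dots i_N}$), or absorb it into a redefinition of $U_\theta$.
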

\noindent In the continuum, the role of local unitaries $V\coloneqq \bigotimes_{j} V_j$ is naturally taken by unitaries of the form
\begin{align}
    V\coloneqq e^{ i\int_I \dd x\,O(x)}
\end{align}
for some Hermitian $O(x)$, as is evident by discretizing $V$. The following example provides a natural candidate for the  cMPU analog of MPUs that are LU-equivalent to phase unitaries {mentioned in the main text}.
\begin{example}[Displaced phase cMPUs]
    \label{example: displaced-phase-cMPU}
    Let $U_\theta$ be a phase cMPU and $V,W$ be $D=1$ cMPUs of the form
    \begin{align*}
        V,W \sim e^{\int\dd x\,i r(x)n(x)+\alpha(x)\psi^\dagger(x) - \alpha(x)^*\psi(x) + \theta(x)\openone }
    \end{align*}
    that forms a subclass of non-squeezing local Gaussian unitaries. Then
    \begin{align}
        U = VU_\theta W 
    \end{align}
    is a cMPU with bond dimension $D$. More generally, the local unitaries $V,W$ can be decomposed into products of $D=1$ cMPUs of the form
    \begin{align}
        V= e^{i\int \dd x f_1(x)n(x)}e^{\int \dd x f_2(x)\psi^\dagger(x)-h.c.} e^{i\int \dd x\,f_3(x)\openone}
    \end{align}
    for some choice of $f_1,f_2,f_3$ that can be found by either the BCH formula, or by using non-Hermitian finite-dimensional representation of the Lie algebra generated by $n(x),\psi^\dagger(x),\psi(x),\openone$ (see, e.g., \cite{gilmore2008lie}). 
    
    The resulting unitary $U$ is not diagonal in the Fock basis of $\psi^\dagger(x)$ as $V,W$ do not generically commute with $\psi^\dagger(x)\psi(x)$ unless $V,W$ are both chosen to be phase unitaries in the same Fock basis.  To see this, let us suppose $V$ is a displacement with amplitude $\alpha(x)$ and $W=\openone$, and suppose $B_\theta,Q_\theta,T_\theta$ are the tensors for phase cMPU. According to the product relation \eqref{eq: product-of-MPO-tensors}, the resulting operator is still bond dimension $D$ cMPU but with the tensors
    \begin{align*}
        B &= B_\theta\,,\qquad\quad\quad   Q = Q_\theta -\frac{1}{2}|\alpha(x)|^2\openone_D \,,\\
        L &= \alpha(x)\openone_D \,,\qquad R = -\alpha(x)T^*_\theta\,,\\
        T &= T_\theta\,.
    \end{align*}
    As expected, the resulting cMPU has nonvanishing $L,R$ which shows that it is no longer a diagonal cMPU. 
    \exampleqed
\end{example}

\subsubsection{Finite-dimensional cMPU}

We say that a cMPU is finite-dimensional if it acts non-trivially only on a finite-dimensional subspace of the full bosonic Hilbert space. The first such example we had was from the number-controlled phase cMPU in Proposition~\ref{proposition: number-controlled-phase-cMPU}. We might expect that more examples should be possible because it acts non-trivially only on finite-dimensional subspaces, provided we work in Fock spaces where the Fock vacuum $\ket{\Omega}$ resides for all system sizes.

The upcoming example of finite-dimensional cMPU is based on a variant of the number-controlled phase cMPU, but it does not fit the ansatz \eqref{eq: phase cMPU-exp}.  We first prove the following lemma to simplify the construction.

\begin{lemma}
    \label{lemma: finite-projections}
    Let $\ket{\psi_i}$ be a finite collection of pairwise $m$ orthonormal states in some Hilbert space $\mathfrak{H}$ and $P_i = \ketbra{\psi_i}{\psi_i}$ be its rank-1 projector. Then
    \begin{align}
        U_{(m)} &\coloneqq  \openone  + \sum_{j=1}^m \rr{e^{ia_j}-1}P_j\,.
    \end{align}
    is unitary for any $a_j\in \R$. 
\end{lemma}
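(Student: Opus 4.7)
The proof is essentially a direct computation exploiting the orthogonality of the rank-1 projectors $P_j$. The plan is to decompose the Hilbert space as $\mathfrak{H} = V \oplus V^\perp$, where $V = \Span\{\ket{\psi_j}\}_{j=1}^m$ is the finite-dimensional subspace spanned by the given orthonormal vectors and $V^\perp$ is its orthogonal complement, and then show that $U_{(m)}$ acts as a direct sum of unitaries on this decomposition.

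First, I would record the orthogonality relation $P_i P_j = \delta_{ij} P_j$, which follows immediately from $\braket{\psi_i|\psi_j} = \delta_{ij}$. Each $P_j$ is Hermitian, so $U_{(m)}^\dagger = \openone + \sum_{j=1}^m (e^{-ia_j} - 1)P_j$. On any vector in $V^\perp$ every $P_j$ vanishes, so $U_{(m)}$ restricts to the identity there. On $V$ we use the orthonormal basis $\{\ket{\psi_j}\}_{j=1}^m$: in that basis $U_{(m)}|_V = \sum_{j=1}^m e^{ia_j} P_j$ is diagonal with eigenvalues of unit modulus, hence unitary. Since $V$ and $V^\perp$ are preserved by $U_{(m)}$ and the operator is unitary on each factor, it is unitary on all of $\mathfrak{H}$.

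Alternatively, and more concretely, one can verify $U_{(m)}^\dagger U_{(m)} = \openone$ by direct expansion: using $P_i P_j = \delta_{ij} P_j$, all cross terms with $i\neq j$ drop and the coefficient of $P_j$ on the diagonal becomes
\begin{equation*}
(e^{ia_j} - 1) + (e^{-ia_j} - 1) + (e^{-ia_j} - 1)(e^{ia_j} - 1) = 0,
\end{equation*}
so $U_{(m)}^\dagger U_{(m)} = \openone$. The same computation with the roles of $U_{(m)}$ and $U_{(m)}^\dagger$ swapped yields $U_{(m)} U_{(m)}^\dagger = \openone$. There is no real obstacle here; the only thing to be careful about is that orthonormality (rather than mere linear independence) is what kills the cross terms $P_i P_j$ for $i\neq j$, which is precisely the hypothesis of the lemma.
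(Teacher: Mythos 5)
Your proof is correct. The paper takes a slightly different and more compact route: it defines the Hermitian operator $A=\sum_{j=1}^m a_j P_j$, uses orthonormality to get $A^n=\sum_j a_j^n P_j$ for $n\geq 1$, and resums the exponential series to conclude $U_{(m)}=e^{iA}$, whence unitarity is immediate. Your argument instead verifies $U_{(m)}^\dagger U_{(m)}=U_{(m)}U_{(m)}^\dagger=\openone$ directly (or, equivalently, block-diagonalizes $U_{(m)}$ over $V\oplus V^\perp$ and reads off unit-modulus eigenvalues); the coefficient computation $(e^{ia_j}-1)+(e^{-ia_j}-1)+|e^{ia_j}-1|^2=0$ is exactly right, and you correctly isolate orthonormality as the hypothesis that kills the cross terms $P_iP_j$, $i\neq j$. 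The two arguments are essentially equivalent in content, but the paper's version has the side benefit of explicitly exhibiting the Hermitian generator $A$ of the unitary, which is the object one wants downstream when writing such cMPUs as $e^{iA}$ in terms of field operators; your version is a more self-contained verification that does not require interchanging the sum over $j$ with the infinite exponential series.
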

\begin{proof}
    Define a Hermitian operator $A=\sum_{j=1}^m a_j P_j$ for some $a_j\in \R$. Using orthonormality of $\ket{\psi_j}$'s, we have $A^n = \sum_j a_j^n P_j$, so that
    \begin{align*}
        e^{i A} &= \sum_{n=0}^\infty\frac{1}{n!}\sum_j a_j^nP_j = \openone + \sum_j (e^{i a_j}-1)P_j \equiv U_{(m)}\,,
    \end{align*}
    hence $U_{(m)}$ is unitary.
\end{proof}

The unitary $U_{(m)}$ in Lemma~\ref{lemma: finite-projections} is agnostic to the dimensionality of $\mathfrak{H}$ and does not rely on any tensor-network assumptions. Indeed, if $P_j$'s are chosen to be some highly entangled ``volume law'' states (not a (c)MPS), then it cannot be a (c)MPU as as it will fail to preserve the area-law entanglement. For example, the multi-control $Z$-gate in discrete MPU setting \cite{Styliaris2025matrixproduct} can be written as
\begin{align*}
    U = \openone^{\otimes N} - 2\ketbra{1}{1}^{\otimes N}
\end{align*}
which amounts to having $U_{(1)}$ with $a_1 = \pi$ and $P_1 = \ketbra{0}{0}^{\otimes N}$ and  is a projector over the span of a $D=1$ MPS $\ket{1}^{\otimes N}$. If we replace $\ketbra{1}{1}^{\otimes N}$ with an arbitrary projector $P_\Psi$ acting on $\mathfrak{H}$ then its action on some normalized MPS $\ket{\psi[B,A]}$
\begin{align*}
    (\openone - 2P_\Psi)\ket{\psi[B,A]} = \ket{\psi[B,A]} - 2\braket{\Psi|\psi[B,A]}\ket{\Psi}
\end{align*}
is not an MPS if $\ket{\Psi}$ itself is not an MPS. 

This argument generalizes to the continuum and gives us a family of cMPUs on Fock space. A natural question to ask is whether there are simple choices of $m$ pairwise-orthonormal cMPS that fulfills the above requirements, since two generic cMPS in Fock space will not be orthogonal. This is indeed possible, at least for the following subfamily of cMPS. 
\begin{lemma}[Single mode $N$-particle Fock states]
    \label{lemma: single-mode-Fock}
    Let $f\in L^2(I)$ be some square-integrable function with $||f||_2^2 = \int_I\dd x\,|f|^2 = 1$. Then the family of {normalized} $N$-particle Fock states in some mode $f$, namely
    \begin{align}
        \ket{N_f} = \frac{1}{\sqrt{N!}}\psi^\dagger(f)^N\ket{\Omega}\,,\quad \psi^\dagger(f) = \int_I\dd x\,f(x)\psi^\dagger(x)\,,
    \end{align}
    admits an exact cMPS representation with constant bond dimension $D = N+1$ where $N\in \mathbb{N}$. The local tensors are given by $Q(x) = 0$ and 
    \begin{align}
        L(x) = f(x)J^-\,,\qquad B = C_N\cdot (J^{+})^N\,,
    \end{align}
    where $J^\pm$ are the $D$-dimensional matrix representation of $\mathfrak{su}(2)$ ladder operators, and $C_N$ is some constant that depends on $N$ but is independent of the system size $\ell$. 
\end{lemma}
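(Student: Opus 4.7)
I would insert the candidate tensors directly into the Fock-space representation of the cMPS in Eq.~\eqref{eq: cMPS-Fock} and show, sector by sector, that all $j$-particle contributions vanish except for $j=N$, where they reconstruct $\ket{N_f}$ up to the normalization $C_N$. Concretely, because $Q(x)=0$ every free propagator $V_i^{i+1}$ collapses to $\openone_D$, so the $j$-particle wavefunction coefficient reduces to
\begin{align*}
\Tr\!\big(B\,L(x_1)\cdots L(x_j)\big) = C_N\,f(x_1)\cdots f(x_j)\,\Tr\!\big((J^+)^N (J^-)^j\big),
\end{align*}
where I have used that the factors of $J^-$ commute trivially with themselves and pull out the scalars $f(x_i)$.

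The crux of the argument is evaluating $\Tr((J^+)^N(J^-)^j)$ inside the spin-$N/2$ irreducible representation. I would proceed by writing $(J^+)^N = N!\,\ketbra{N/2}{-N/2}$ (the top weight is reached from the bottom weight after exactly $N$ applications of $J^+$, with the standard $\mathfrak{su}(2)$ ladder coefficients multiplying to $N!$), so that $(J^+)^N(J^-)^j = N!\,\ketbra{N/2}{-N/2}(J^-)^j$. This is a rank-one operator, nonzero only on the single weight space $\ket{j-N/2}$, and therefore its trace vanishes unless $j = N$, in which case a second application of the ladder identity gives $(J^-)^N\ket{N/2} = N!\ket{-N/2}$ and hence $\Tr((J^+)^N(J^-)^N) = (N!)^2$. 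The finite-dimensional nilpotency $(J^\pm)^{N+1}=0$ inside the spin-$N/2$ irrep is what enforces this selection rule, and I expect this to be the one genuinely substantive step in the proof.

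Combining these observations, only the $j=N$ sector survives and one obtains
\begin{align*}
\ket{\psi[B,0,L]} = C_N(N!)^2 \int D^N x\,\prod_{i=1}^N f(x_i)\,\psi^\dagger(x_i)\ket{\Omega}.
\end{align*}
Because both the integrand and the string of creation operators are totally symmetric in $(x_1,\dots,x_N)$, the ordered-simplex integral equals $1/N!$ times the unrestricted one, which is $\psi^\dagger(f)^N\ket{\Omega}/N!$. Choosing $C_N = 1/(N!)^{3/2}$ then reproduces $\ket{N_f}=\psi^\dagger(f)^N\ket{\Omega}/\sqrt{N!}$ and is manifestly $\ell$-independent, as claimed. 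I would also note, as a minor remark, that the construction is not unique: any boundary matrix proportional to $\ketbra{N/2}{-N/2}$ in the spin-$N/2$ irrep (equivalently, any rank-one coupling between the extremal weights) yields $\ket{N_f}$ after suitable rescaling, reflecting the gauge freedom of the cMPO ansatz discussed around Eq.~\eqref{eq: gauge-transformation}.
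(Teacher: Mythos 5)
Your proposal is correct and follows essentially the same route as the paper's own proof: with $Q=0$ the coefficients reduce to $C_N f(x_1)\cdots f(x_j)\Tr((J^+)^N(J^-)^j)$, the spin-$N/2$ ladder algebra forces this trace to vanish unless $j=N$ where it equals $(N!)^2$, and the symmetrization of the ordered-simplex integral fixes $C_N=(N!)^{-3/2}$. The only addition beyond the paper is your closing remark on the non-uniqueness of $B$, which is a correct observation consistent with the gauge freedom discussed in the text.
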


\begin{proof}
    This follows immediately from the fact that in the cMPS coefficients in Eq.~\eqref{eq: cMPS-Fock}, the trace over matrix products over the $j$-th particle sector reads
    \begin{align*}
        \Tr(B V_-^1L_1V_1^2...L_k V_k^+) &=  \Tr(B L^k)f(x_1)...f(x_k)\,.
    \end{align*}
    Using the standard notation for spin-$j$ representation of $\mathfrak{su}(2)$ that $J^\pm\ket{j,m} = \sqrt{j(j+1)-m(m\pm 1)}\ket{j,m\pm 1}$, we identify $D = 2j+1$ so $N=2j$. Observe that $(J^+)^{N}(J^-)^k$ is traceless unless $N=k$, in which case we have
    \begin{align*}
        (J^+)^N(J^-)^N = (N!)^2 \cketbra{\frac{N}{2},\frac{N}{2}}{\frac{N}{2},\frac{N}{2}}\,.
    \end{align*}
    so that $\Tr(BL^N)= C_N(N!)^2$. To normalize the state we use the fact that
    \begin{align*}
        &\int D^Nx\,\Tr(BL^N)f(x_1)...f(x_N)\psi^\dagger(x_1)...\psi^\dagger(x_N)\ket{\Omega} \notag\\
        &=\frac{1}{N!}\int \dd^Nx\,C_N(N!)^2f(x_1)...f(x_N)\psi^\dagger(x_1)...\psi^\dagger(x_N)\ket{\Omega}
    \end{align*}
    so to get the prefactor $1/\sqrt{N!}$ we set $C_N = (N!)^{-3/2}$.
\end{proof}

The following result shows that a rank-1 operator that maps one cMPS $\ket{\psi_1}$ to another cMPS $\ket{\psi_2}$ is itself a cMPO with bond dimension $D_1D_2$. This implies that the rank-1 projector over the subspace spanned by a cMPS $\ket{\psi}$ is a cMPO with bond dimension $D^2$.

\begin{proposition}
    \label{proposition: two-cMPS-projection}
    Let $\ket{\psi_1},\ket{\psi_2}$ be two normalized cMPS with the respective local tensors $B_j,Q_j,L_j$ with bond dimension $D_j$ where $j=1,2$. Then the operator $O_{ij}\coloneqq \ketbra{\psi_i}{\psi_j}$ is a cMPO with bond dimension $D=D_iD_j$. Furthermore, $O_{ij}$ admits an exponential ansatz \eqref{eq: cMPO-ansatz}
    \begin{align}
         \ketbra{\psi_i}{\psi_j} = \Tr(B_{ij}\mathcal{P}e^{\int\dd x\,\mathfrak{L}_{ij}(x)})[\ketbra{\Omega}{\Omega}]
    \end{align}
    where for each $i,j=1,2$ the local tensors and boundary matrices are
    \begin{equation}
    \begin{aligned}
        B_{ij}    &\coloneqq B^{\phantom{*}}_i\otimes B_j^*\,,\\
        Q_{ij}(x) &\coloneqq Q_i(x)\otimes\openone_{D_2} + \openone_{D_1}\otimes Q_j^*(x)\\
        L_{ij}(x) &\coloneqq L_i(x)\otimes\openone_{D_2} \,,\\
        R_{ij}(x) &\coloneqq \openone_{D_1}\otimes L^*_j(x)\,,\\
        T_{ij}(x) &= 0\,.
    \end{aligned}
    \end{equation}
\end{proposition}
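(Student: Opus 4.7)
The plan is to match both sides in their Fock representation. Using Definition~\ref{definition: cMPS}, I first expand
\begin{align*}
    \ketbra{\psi_i}{\psi_j} &= \sum_{k,m\geq 0}\int D^k x\,D^m y\;\alpha^{(i)}_k(x)\,\alpha^{(j)}_m(y)^{*} \\
    &\times\psi^\dagger(x_1)\cdots\psi^\dagger(x_k)\,\ketbra{\Omega}{\Omega}\,\psi(y_m)\cdots\psi(y_1),
\end{align*}
where $\alpha^{(i)}_k(x)\coloneqq\Tr(B_i V_{i,-}^1 L_i(x_1)V_i^{12}\cdots L_i(x_k)V_{i,k}^+)$ is the $k$-particle cMPS coefficient for $\ket{\psi_i}$ and $\alpha^{(j)}_m(y)^{*}$ is the complex conjugate of the corresponding $m$-particle coefficient for $\ket{\psi_j}$. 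I then expand the proposed cMPO via the Dyson series of \eqref{eq: cMPO-ansatz}. Since $T_{ij}=0$, every term contains only $L_{ij}(z)\otimes l_z$ and $R_{ij}(z)\otimes r_z$ insertions together with free propagators $V_{ij,x}^y$ generated by $Q_{ij}$, and the task is to show these reproduce the display above.

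The key observation is the tensor-product structure of the proposed tensors. Because $Q_{ij}=Q_i\otimes\openone+\openone\otimes Q_j^*$ is a sum of commuting terms, the free propagator factorizes as $V_{ij,x}^y=V_{i,x}^y\otimes (V_{j,x}^y)^*$. Moreover, $L_{ij}=L_i\otimes\openone$ acts trivially on the second auxiliary factor while $R_{ij}=\openone\otimes L_j^*$ acts trivially on the first. Pick a Dyson term with ordered insertions at $z_1<\cdots<z_n$ and partition them into a $k$-tuple of $L$-insertions at $x_1<\cdots<x_k$ and an $m$-tuple of $R$-insertions at $y_1<\cdots<y_m$. Since $R$-insertions are identities on the first factor, the intermediate $V_i$ propagators telescope across every $R$-insertion via $V_{i,z_a}^{z_b}V_{i,z_b}^{z_c}=V_{i,z_a}^{z_c}$, and symmetrically the $V_j^*$ propagators telescope across every $L$-insertion. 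Combined with $B_{ij}=B_i\otimes B_j^*$ and $\Tr(A\otimes B)=\Tr(A)\Tr(B)$, the full auxiliary trace factorizes as $\alpha^{(i)}_k(x)\,\alpha^{(j)}_m(y)^{*}$. For the operator part, the commutation of left- and right-multiplication supermaps $[l_x,r_y]=0$, together with the bosonic commutation of creation (annihilation) operators among themselves, implies that applying the $n$ supermaps in any order on $\ketbra{\Omega}{\Omega}$ yields $\psi^\dagger(x_1)\cdots\psi^\dagger(x_k)\ketbra{\Omega}{\Omega}\psi(y_m)\cdots\psi(y_1)$, independent of the $LR$-interleaving.

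Finally, for each fixed $(k,m)$ with $n=k+m$, summing over all interleavings (i.e., all ways of assigning each $z_a$ to $L$ or $R$) while integrating over $z_1<\cdots<z_n$ is equivalent to integrating independently over the ordered tuples $x_1<\cdots<x_k$ and $y_1<\cdots<y_m$, since every such pair corresponds uniquely to its concatenated ordered sequence of length $n$. This reproduces the double integral in the Fock representation and completes the identification, with the stated bond dimension $D=D_iD_j$ being manifest from the tensor-product structure of the ansatz. The main obstacle is verifying the trace factorization carefully: for arbitrary interleavings of $L$- and $R$-insertions one must check that the ``inactive'' propagators collapse cleanly on each factor via telescoping so that the auxiliary trace splits into two independent cMPS traces. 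Once this is in hand, the combinatorial step reducing interleavings to independent ordered tuples is elementary.
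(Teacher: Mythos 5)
Your proof is correct and follows essentially the same route as the paper's (which is stated in one sentence): expand both sides in the Fock/Dyson representation and factorize via $\Tr(X)\Tr(Y)=\Tr(X\otimes Y)$ and $\exp(X)\otimes\exp(Y)=\exp(X\otimes\openone+\openone\otimes Y)$, the latter being exactly your propagator factorization $V_{ij,x}^y=V_{i,x}^y\otimes (V_{j,x}^y)^*$. You simply spell out the telescoping of the inactive propagators and the interleaving combinatorics that the paper leaves implicit.
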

\begin{proof}
    The proof follows by expanding each cMPS $\ket{\psi_i},\bra{\psi_j}$ in Dyson series (\textit{cf.} Eq.~\eqref{eq: cMPS-Fock}) and use the fact that $\Tr(X)\Tr(Y) = \Tr(X\otimes Y)$ and $\exp(X)\otimes \exp(Y) = \exp(X\otimes \openone + \openone\otimes Y)$. 
\end{proof}

\begin{proposition}[cMPU over cMPS subspaces]
    \label{proposition: cMPU over cMPS subspace}
    Consider a collection of $m$ rank-1 orthonormal projectors
    \begin{align}
        P_j\coloneqq \ketbra{\psi[B_j,Q_j,L_j]}{\psi[B_j,Q_j,L_j]}\,,
    \end{align}
    where $\{\ket{\psi[B_j,Q_j,L_j]}\}_{j=1}^m$ are $m$ pairwise orthonormal cMPS in $\mathfrak{F}(\mathcal{H})$. Then the unitary 
    \begin{align}
        U_{(m)}\coloneqq \openone + \sum_{j=1}^m\rr{e^{ia_j}-1}P_j
    \end{align}
    is a cMPU with bond dimension $D = 1 + \sum_{j=1}^m D_j^2$. {The cMPO ansatz is given by direct sum of the respective tensors}. 
\end{proposition}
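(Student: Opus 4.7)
The plan is to assemble $U_{(m)}$ as a direct sum of cMPOs, using results already available in the excerpt.

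First I would show that a finite sum of cMPOs is itself a cMPO whose bond dimension is the sum of the constituent bond dimensions, and whose local tensors are the block-diagonal direct sums of the constituent tensors. Given cMPOs $O^{(k)} = \Tr_{D_k}(B^{(k)} \mathcal{P}e^{\int \dd x\,\mathfrak{L}^{(k)}_x})[\ketbra{\Omega}{\Omega}]$ for $k=0,1,\dots,m$, define
\begin{equation}
B \coloneqq \bigoplus_k B^{(k)}, \quad X \coloneqq \bigoplus_k X^{(k)} \ \text{for } X\in\{Q,L,R,T\}.
\end{equation}
Because $\mathfrak{L}_x = Q\otimes\Id + L\otimes l_x + R\otimes r_x + T\otimes\Ad_x$ is block-diagonal on the auxiliary space with blocks $\mathfrak{L}^{(k)}_x$ that act independently on the field, the path-ordered exponential $\mathcal{P}e^{\int\dd x\,\mathfrak{L}_x[\cdot]}$ is also block-diagonal on the auxiliary space, with the $k$-th block equal to $\mathcal{P}e^{\int\dd x\,\mathfrak{L}^{(k)}_x[\cdot]}$. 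Combined with $\Tr_D(\bigoplus_k M_k) = \sum_k \Tr_{D_k}(M_k)$, this yields $\sum_k O^{(k)}$ on the nose.

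Next I would apply this block-diagonal assembly to the specific summands of
\begin{equation}
U_{(m)} = \openone + \sum_{j=1}^m (e^{ia_j}-1)P_j.
\end{equation}
The identity term is a $D=1$ cMPO with $B^{(0)}=T^{(0)}=1$ and $Q^{(0)}=L^{(0)}=R^{(0)}=0$, as noted after Eq.~\eqref{eq: identity-cMPU}. For each $j\geq 1$, Proposition on two-cMPS projection shows that $P_j=\ketbra{\psi_j}{\psi_j}$ is a cMPO with bond dimension $D_j^2$ and explicit tensors $(B_{jj},Q_{jj},L_{jj},R_{jj},T_{jj}=0)$. The scalar prefactor $(e^{ia_j}-1)$ can be absorbed into the boundary matrix by taking $B^{(j)} \coloneqq (e^{ia_j}-1)B_{jj}$, since the boundary enters linearly via a single trace and is allowed to depend on the system parameters. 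Taking the direct sum over $k=0,1,\dots,m$ gives a cMPO with bond dimension $1+\sum_{j=1}^m D_j^2$ whose value is exactly $U_{(m)}$.

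Finally, unitarity of the resulting cMPO is already supplied by Lemma~\ref{lemma: finite-projections}, which only uses pairwise orthonormality of the $\ket{\psi_j}$ and makes no reference to the tensor-network structure, so no separate check against Lemma~\ref{lemma: unitarity-v1} is needed.

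The only subtlety I anticipate is the \emph{bookkeeping} step of verifying that block-diagonality of $\mathfrak{L}_x$ on the auxiliary factor really does propagate through the path-ordered exponential term-by-term in its Dyson series expansion; this is essentially immediate from the fact that commutators $[X\otimes A,Y\otimes B]$ with $X,Y$ in disjoint blocks vanish regardless of what the super-operators $A,B$ do on the field, but it is the one place where a careful reader may want to see a short explicit argument. Everything else is an orchestration of already-proven ingredients.
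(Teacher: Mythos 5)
Your proposal is correct and follows essentially the same route as the paper's own (much terser) proof: realize $U_{(m)}$ as a linear combination of the identity cMPO and the $m$ projector cMPOs from Proposition~\ref{proposition: two-cMPS-projection}, implemented as a direct sum of tensors with scalars absorbed into the boundary matrices, and obtain unitarity from Lemma~\ref{lemma: finite-projections} rather than Lemma~\ref{lemma: unitarity-v1}. The only difference is that you spell out the block-diagonality of the Dyson series, which the paper asserts without proof.
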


\begin{proof}

    Since $U_{(m)}$ is a finite linear combinations of $m+1$ cMPOs with total bond dimension  at most $D \leq  \sum_{j=0}^m D_j$. At the same time, unitarity follows from Eq.~\eqref{lemma: finite-projections}, which is simpler to check than going through Lemma~\ref{lemma: unitarity-v1}.
\end{proof}

\begin{example}[Controlled-phase over one-particle subspace]
    \label{example: Fock-state-controlled-phase}
    Consider the unitary of the form
    \begin{align}
        U_\theta &= \openone + (e^{i\theta}-1)\ketbra{1_f}{1_f}\\
        &= \openone + (e^{i\theta}-1)\int\dd x\,\dd y\,f^*(x)f(y)\psi^\dagger(x)\ketbra{\Omega}{\Omega}\psi(y)\,.  \notag
    \end{align}
    This is the special case of Lemma~\ref{lemma: finite-projections} for $m=1$. The projector in the second term always takes 1-particle Fock state  $\psi^\dagger(g)\ket{\Omega}$ in some mode $g$ to another 1-particle Fock state $\psi^\dagger(f)\ket{\Omega}$, so it is not a mere multiplication by a phase. Consequently, it is not equivalent to the phase unitary family in Proposition~\ref{proposition: number-controlled-phase-cMPU}. Since by Proposition~\ref{proposition: two-cMPS-projection} the projector $\ketbra{1_f}{1_f}$ is a cMPO with bond dimension $D=4$, we know that $U_{\theta}$ is a $D=5$ cMPO and its unitarity follows from Lemma~\ref{lemma: finite-projections}, hence it is a cMPU. 
    \exampleqed
\end{example}

\begin{example}[Swapping vacuum and 1-particle subspace]
    \label{example: swapping vacuum}
    Consider the operator
    \begin{align}
        U = \openone + \ketbra{1_f}{\Omega} + \ketbra{\Omega}{1_f} - \ketbra{\Omega}{\Omega} - \ketbra{1_f}{1_f}\,.
    \end{align}
    where $||f||_2^2=1$. This operator is manifestly unitary and swaps vacuum and one-particle sector. Notice that this is a non-diagonal variant of Example~\ref{example: Fock-state-controlled-phase} by observing that
    \begin{align}
        U = \openone - 2\ketbra{\pm_f}{\pm_f}\,,\quad \ket{\pm_f} =\frac{1}{\sqrt{2}}\rr{\ket{\Omega}\pm\ket{1_f}}\,. 
    \end{align}
    Lemma~\ref{lemma: single-mode-Fock} shows that $\ket{1_f}$ is a cMPS with $D=2$, which implies that $\ket{\pm_f}$ is a cMPS with bond dimension $D=3$. Consequently, by Proposition~\ref{proposition: cMPU over cMPS subspace}, it is a cMPU with bond dimension $D=10$. 
    \exampleqed
\end{example}

\section{Generalized cMPO}

Below we provide some examples of ``generalized cMPO'', obtained by suitably modifying the cMPO ansatz \eqref{eq: cMPO-ansatz} directly. This amounts to relaxing condition (ii) that the cMPO arises as a continuum limit of some discrete MPO as prescribed in this work. 

The first is
\begin{align}
    O\coloneqq \Tr(B\mathcal{P}e^{\int\dd x\,Q(x)\otimes \Id[\cdot]  + \psi^\dag(x-a)[\cdot]\psi(x)})(\ketbra{\Omega}{\Omega})\,,
\end{align}
where $a\in I$. This corresponds to modifying the adjoint action $\Ad_x\equiv \psi^\dagger(x)[\cdot]\psi(x)$ to have translated field operator on the left. To illustrate its action, we see that for $D=1$ with $B=1,Q=0,T=1$, we have
\begin{align}
    O&= \mathcal{P}e^{\int\dd x\,\psi^\dag(x-a)[\cdot]\psi(x)})(\ketbra{\Omega}{\Omega}) \notag\\
    &\hspace{-0.3cm}= \sum_{j=0}^\infty\int D^jx\,\psi^\dagger(x_1-a)...\psi^\dagger(x_j-a)\ketbra{\Omega}{\Omega}\psi(x_1)...\psi(x_j) \,.
\end{align}
Its action each $j$-particle sector is essentially that of translation, since for any state (not necessarily a cMPS)
\begin{align*}
    \ket{\psi} = \sum_{j=0}^\infty\int D^jx\, c_j(x_1,...,x_j)\psi^\dagger(x_1)...\psi^\dagger(x_j)\ket{\Omega}
\end{align*}
we have
\begin{align*}
    &O\ket{\psi} \notag\\
    &= \sum_{j=0}^\infty\int D^jx\, c_j(x_1,...,x_j)\psi^\dagger(x_1-a)...\psi^\dagger(x_j-a)\ket{\Omega}\,.
\end{align*}
Consequently, $O$ is the unitary generating translation if we also impose either periodic boundary conditions or the field vanishes at infinity:
\begin{align}
    O = e^{-a\int\dd x\,\psi^\dagger(x)\partial_x\psi(x)} \equiv e^{-ia\partial_x}\,.
\end{align}

The second example is defined as
\begin{align*}
    O\coloneqq \lim_{\epsilon\to 0}\Tr(B\mathcal{P}e^{\int\dd x\,Q(x)\otimes \Id[\cdot]  + T(x)\otimes \mathcal{A}_\epsilon[\cdot]})(\ketbra{\Omega}{\Omega})\,,
\end{align*}
where
\begin{align*}
    \mathcal{A}_\epsilon\coloneqq \frac{i a}{\epsilon^2}\!\!\rr{\psi^\dagger(x+\epsilon)-\psi^\dagger(x-\epsilon)}[\cdot]\rr{\psi(x+\epsilon)-\psi(x-\epsilon)}
\end{align*}
and $a$ has units of length. This amounts to replacing $\Ad_x\equiv \psi^\dagger(x)[\cdot]\psi(x)$ in the cMPO ansatz with $\mathcal{A}_0\coloneqq ia\partial_x\psi^\dagger(x)[\cdot]\partial_x\psi(x)$.  Suppose we set $D=1,B=1,Q=0,T=1$. Then its action on the basis states $\ket{j}\coloneqq\psi^\dagger(x_1)...\psi^\dagger(x_j)\ket{\Omega}$ gives
\begin{align}
    O\ket{j} = \prod_{n=1}^j -a\partial_{x_j}^2\psi(x_j)\ket\Omega
\end{align}
using the symmetric difference for the approximation of the second derivative. Assuming that the boundary term vanishes under the integration by parts, the operator corresponds to
\begin{align}
    O \equiv e^{-i a \int \dd x\,\partial_x\psi^\dagger(x)\partial_x\psi(x)}
\end{align}
where the generator of the unitary $O$ is proportional to the kinetic part of the Lieb-Liniger Hamiltonian \cite{ganahl2017liebliniger}.

\end{document}